\newtheorem{remark}{Remark}
\newtheorem{condition}{Condition}
\newtheorem{proposition}{Proposition}
\newtheorem{theorem}{Theorem}
\newtheorem{lemma}{Lemma}
\DeclareMathOperator*{\argmin}{arg\,min}
\newcommand{\goodchi}{\protect\raisebox{2pt}{\large$\chi$}}
\def\BibTeX{{\rm B\kern-.05em{\sc i\kern-.025em b}\kern-.08em
    T\kern-.1667em\lower.7ex\hbox{E}\kern-.125emX}}
\providecommand{\keywords}[1]
{
  \small	
  \textbf{\textit{Keywords---}} #1
}
\begin{document}    

\title{A Deep Neural Network Algorithm for Linear-Quadratic Portfolio Optimization with MGARCH and Small Transaction Costs}

\author{Andrew Papanicolaou\thanks{Department of Mathematics, North Carolina State University, Campus Box 8205, Raleigh, NC 27695. (apapani@ncsu.edu). Corresponding author. This work was partially supported by NSF grant DMS-1907518.} Hao Fu\thanks{Electrical and Computer Engineering, NYU Tandon School of Engineering, Brooklyn, NY 11201 (hf881@nyu.edu)}, Prashanth Krishnamurthy\thanks{Electrical and Computer Engineering, NYU Tandon School of Engineering, Brooklyn, NY 11201 (prashanth.krishnamurthy@nyu.edu)}, Farshad Khorrami\thanks{Electrical and Computer Engineering, NYU Tandon School of Engineering, Brooklyn, NY 11201 (khorrami@nyu.edu)}}

\date{}
\maketitle
\begin{abstract}
We analyze a fixed-point algorithm for reinforcement learning (RL) of optimal portfolio mean-variance preferences in the setting of multivariate generalized autoregressive conditional-heteroskedasticity (MGARCH) with a small penalty on trading. A numerical solution is obtained using a neural network (NN) architecture within a recursive RL loop. A fixed-point theorem proves that NN approximation error has a big-oh bound that we can reduce by increasing the number of NN parameters. The functional form of the trading penalty has a parameter $\epsilon>0$ that controls the magnitude of transaction costs. When $\epsilon$ is small, we can implement an NN algorithm based on the expansion of the solution in powers of $\epsilon$. This expansion has a base term equal to a myopic solution with an explicit form, and a first-order correction term that we compute in the RL loop. Our expansion-based algorithm is stable, allows for fast computation, and outputs a solution that shows positive testing performance.
\end{abstract}

\keywords{Hetereoskedasticity, MGARCH, Fixed-point algorithms, Reinforcement learning, Deep neural networks}

\tableofcontents

\section{Introduction}

The problem we consider is one faced by a fund manager who has just taken in a large amount of new capital. This capital needs to be integrated into the portfolio but transaction costs caused by large bid/ask spreads make it extremely inefficient to directly invest the entire amount immediately (i.e., the typical buy-and-hold strategy is sub-optimal).  A more efficient way is to invest the new funds according to a solution to a dynamic mean-variance optimization that includes a quadratic penalty on trade size. Optimal execution of large orders was formulated as a mean-variance optimization with penalization on trades in \cite{almgren2001optimal}, and a multi-asset version of this problem was studied in \cite{garleanu2013dynamic}. However in practice, many assets have heteroskedasticity, and therefore it is interesting to consider mean-variance preferences in the setting of dynamic covariance matrices given by a multi-variate GARCH (MGARCH) model. In addition, when volatility spikes, there is an increase in price impact (\cite{capponi2019trade}). This negative correlation between price and volatility was also found by \cite{black1976studies}. \cite{mantalos2020improved} modeled such volatility by fitting skewness in the ARCH model. This work assumes that  the penalty on trading depends on the instantaneous value of the covariance matrix (e.g., the condition number of the covariance matrix) to describe such price volatility. The contrary movement between degrees of freedom in covariance matrices and overall market volatility is highlighted in \cite{avellaneda2010statistical} and also touched upon in \cite{laloux2000random}. This heteroskedastic problem is a linear-quadratic program, but with the added feature of non-constant coefficients that depend on the MGARCH process. 

When solving this linear-quadratic program with non-constant coefficients in high dimensions, there are substantial computational challenges for non-ML approaches. There is no explicit solution because of non-constant coefficients; this is in contrast with a linear-quadratic program with constant coefficients for which the solution has a feedback form given by a matrix Riccati equation. For non-constant coefficients and high dimensions, the Riccati equation is not explicitly solvable, but a more tractable solution uses a neural network (NN) to obtain a policy approximation that is refined through reinforcement learning (RL). This estimated policy is the output of an iterated algorithm that is similar to the actor-critic approach used in the deep-deterministic policy gradient (DDPG \cite{2013arXiv1312.5602M}) and the deep Q Network (DQN \cite{2015arXiv150902971L}) for discrete action spaces.

The NN-based policy approximations have the capacity to manage high dimensions, but  the implementation of the method requires some analysis of the problem. In particular, actor-critic RL algorithms are known to be unstable in many situations (\cite{han2020actor,yang2019provably}). Stability and computation time for RL algorithms are discussed in \cite{wang2018boosting} along with a boosting algorithm to mitigate these issues. In \cite{van2015deep}, a double-network method is used to stabilize the DQN method; a double network is used in \cite{2015arXiv150902971L} wherein it is referred to as the actor-critic approach; the normalized advantage function in \cite{wang2016dueling} uses dueling networks.

In this paper, we resolve these stability and computation-time issues by constructing a solution that exploits the smallness of the transaction costs and uses the explicit form of the solution to a myopic problem. We call a solution myopic when its decision does not forecast the future and is purely based on historical observations, that is, the optimization has a discount parameter $\delta\in[0,1)$ and for the case $\delta=0$ the solution is myopic. The optimization also has transaction costs with a small parameter $\epsilon>0$, and as $\epsilon$ tends toward zero the solution becomes more and more like the myopic solution. In this paper, we will show that the explicit myopic formula is the base term in an expansion of the optimal solution in powers of $\epsilon$. The principal effects of non-myopic RL are captured by the order-$\epsilon$ term in this expansion, which means that a good approximation is the base term plus the order-$\epsilon$ term.

\subsection{Background and Review}

Other machine learning applications in finance include \cite{sirignano2018dgm} where stochastic gradient descent (SGD) with deep NN architecture is used for computing prices of American options on large baskets of stocks, and in \cite{han2018deepBSDE} where an RL approach is used to numerically solve high-dimensional backward stochastic differential equations related to finance. In \cite{fischer2018deep}, the authors utilized an LSTM network for predicting the price movement with daily S\&P500 data\footnote{https://en.wikipedia.org/wiki/S\%26P\_500\#cite\_note-history-17}. The performance of the LSTM is mixed during different periods. The short-term trend prediction in the price movement on NASDAQ by the deep network was studied by \cite{namdari2021multilayer}. The authors utilized features from both fundamental and technical analysis as the network input. \cite{kim2019hats} used a graph network to predict the stock price movement on S\&P500 data.   In \cite{liang2018adversarial}, they demonstrate how adversarial learning methods can be used to automate trading in stocks. General methods from control theory have been applied for optimal trading decisions in \cite{BP15,MPB18}. The effects of transaction costs and liquidity are well-studied (\cite{almgren2001optimal,chandra2019singular,rogers2010cost}). In particular, the ``aim portfolio" description given in \cite{garleanu2013dynamic} has been a key result for the management of large funds. The discussed works are based on supervised learning. Therefore, the non-supervised learning approaches should also be studied as they may address more complicated problems.

It was originally thought that combining NN with RL recursions would lead to instability (\cite{yang2019provably}) due to the approximation errors caused by using the neural networks. However, hypothetically, the accuracy of NN approximations can be improved to within arbitrarily close bounds (see \cite{cybenko1989approximation,pinkus1999}). Additionally, the actor-critic approach has been shown in settings to have a stabilizing effect (\cite{mnih2016asynchronous,sutton2000policy}).  The actor-critic approach utilizes two independent deep networks to represent a policy function (the actor) that provides a group of possible actions for a given state and an evaluation function (the critic) that evaluates the action taken by the actor based on the current policy function. By alternatively updating the actor and critic with the given objective function, the two networks converge. The DDPG algorithm (\cite{2015arXiv150902971L}) takes the actor-critic approach for solving RL problems with continuous control space and is based on the DPG algorithm in \cite{Silver:2014:DPG:3044805.3044850}. The DQN method in \cite{2013arXiv1312.5602M} is an RL algorithm that utilizes and trains a deep network to represent the Q-value function. Then at each instant, the DQN takes the action based on the Q-value network. The DQN does not use an actor-critic approach. and can only address discrete action problems, whereas the DDPG, which extends from DQN, can also address continuous action problems.  But the DDPG is more difficult to train and sometimes unstable as it uses an actor-critic approach. Reinforcement learning was utilized in many topics, such as wireless (\cite{chen2021rdrl}) and mobile edge computing networks (\cite{chen2022game,chen2022gpds}).

 In portfolio management, \cite{jiang2017deep} utilized the DDPG to optimize the cryptocurrency portfolio. \cite{liang2018adversarial} utilized both DDPG and proximal policy optimization (PPO) for portfolio management. \cite{buehler2019deep} considers portfolio optimization for a transaction cost. However, the authors only considered a linear cost. In this work, we considered a quadratic transaction cost that is more practical. Similar to the problem we consider in this paper is the linear-quadratic regulator with uncertainty in the (constant) coefficients. There are provable bounds for identifying the minimum run-time required from an on-policy approximation, after which the observed data is used to refine the policy to within a given tolerance of the optimal (\cite{dean2019sample,krauth2019finite}). This approach to uncertainty in on-policy learning is analyzed as an actor-critic approach in \cite{yang2019provably}.

\subsection{Results in this Paper}
The analyses in this paper show the effectiveness of RL and NN-based policy approximations for solving linear-quadratic programs with non-constant coefficients and small penalization on control. We consider an iterative scheme for the optimal controls, for which  we can prove convergence to a fixed point under some reasonable conditions. We extend this fixed point argument to show that NN approximation errors will compound over time, but that their total will be of order big-oh in the magnitude of 
an error that is reduced as the number of NN parameters is increased. The problem is of practical interest because the MGARCH covariance process does not allow for an explicit solution, such as those seen in other linear-quadratic optimizations (\cite{chow1975analysis,recht2019tour}). 

For a faster implementation, we propose an algorithm that exploits the smallness of the transaction cost parameter. In particular, we write the solution as a series expansion in powers of the transaction costs parameter, which is small. Using only a few terms from this expansion we form a sub-optimal solution that tends toward the optimum as the parameter decreases toward zero. This approach is similar to a standard NN-based policy gradient but is more stable (i.e., the difference between our RL output and the ground-truth optimal output is bounded) because the series terms are fast and easy to compute, and because the NN approximation error is present only in the higher-order terms and therefore is an order of magnitude smaller than it would've been without the expansion. In our experiments, we  take the first two terms in this series: the based term that is equal to an explicitly computable myopic solution and a first-order correction term that we compute using an RL loop and NN functional approximation. We implement this approximation using a single network that contains only fully-connected layers without requiring dueling networks or special machinery, and in our studies on market data, we can see that the correction term provides some improvement compared to using only the myopic control, which is defined in \eqref{eq:myopic}. The key difference between our RL strategy and the myopic strategy is that the myopic strategy makes decisions without forecasting the future, whereas the RL strategy forecasts the future to make decisions.
Overall, the contribution of our paper includes the following:
\begin{itemize}
    \item We show the effectiveness of RL and NN-based policy approximations for solving linear-quadratic programs with non-constant coefficients and small penalization on control.
    \item We prove the convergence of an iterative scheme to a fixed point for the optimal controls under some reasonable conditions.
    \item We show that NN approximation errors will be of order big-oh bound.
    \item We propose an algorithm based on NN approximation that exploits the smallness of the transaction cost parameter for faster implementation.
    \item We evaluate our algorithm on both synthetic and historical market data, which shows positive testing results.
\end{itemize}

The remaining part of this paper is organized as follows: First, the problem is mathematically formulated. A solution with a two-step iteration scheme is proposed, and its convergence is analyzed. A practical method for implementing the solution using neural networks is proposed. A small-$\epsilon$ analysis regarding the neural network solution is shown. Lastly, the method is evaluated on synthetic market data and historical market data, and the results are discussed.

\section{Model and Optimization Problem}
\label{sec2}
Let $R_t\in\mathbb R^n$ denote the vector of $n$-many assets' returns realized at time $t$, and let $\Sigma_{t-1}$ be the covariance of $R_t$ given the information immediately prior at time $t-1$. An MGARCH model (\cite{engle1995,Bauwens2006}) is the following:
\begin{align}
\label{eq:returns}
R_{t+1}& = \mu  + Z_{t+1}\\
\label{eq:volMatrix}
\Sigma_{t+1}&= CC^\top +A\Sigma_tA^\top + BZ_{t+1}Z_{t+1}^\top B^\top\ 
\end{align}
where $\mu$ is the conditional mean vector, $Z_{t+1}\sim \hbox{iid}(0,\Sigma_t)$, and where $A$, $B$ and $C$ are $n\times n $ matrices. The following condition ensures that $\Sigma_t$ is always invertible:
\begin{condition}
    \label{cond:C}
    The matrix $C$ given in \eqref{eq:volMatrix} is full rank so that $CC^\top$ is invertible, that is, there is a positive lower bound $c = \inf_{\|v\|=1}v^\top CC^\top v>0$.
\end{condition}
 Asset prices are calculated by compounding the returns \eqref{eq:returns}. For $1\leq i\leq n$, the $i^{th}$ asset's price is
\begin{equation}
    \label{eq:S}
    S_{t+1}^i=h(S_t^i,R_{t+1}^i)
\end{equation}
where $h$ is some known function. A typical choice of $h$ is $h(s,r) = s(1+r)$ as used in \cite{engle1995,Bauwens2006}, but for technical reasons, we will need to impose the following condition: 
\begin{condition}
    \label{cond:S_LB}
    The function $h$ in \eqref{eq:S} is finitely bounded away from zero. That is, $\|h\|_\infty =\sup_{s,r}|h(s,r)|<\infty$ and there exists constant $\underbar s>0$ such that $\inf_{s,r}h(s,r)\geq \underbar s$. 
\end{condition}

Denote the $\mathbb R^n$ vector of these prices as
\[\vec S_t = 
\begin{pmatrix}
    S_t^1\\S_t^2\\\vdots\\S_t^n
\end{pmatrix}\ .
\]
Next, define the covariance matrix of the dollar returns,
\begin{align*}
    P_t&= \Psi_t\Sigma_t\Psi_t
\end{align*}
where 
\begin{align*}
    \Psi_t = \hbox{diag}(\vec S_t)\ .
\end{align*}
Let $X_t\in\mathbb R^n$ denote a manager's holdings in assets (in contract units). The returns (in dollar units) on this portfolio are $\sum_i(S_{t+1}^i-S_t^i)X_t^i$, the expected value of these returns is $\mu^\top S_tX_t$, and their variance is $X_t^\top P_tX_t$. The portfolio manager has a control $\{a_t, t=1,2,3,\dots \}$ that she selects at time $t$ to change $X_t$. The manager's control should be optimal with respect to her mean-variance preferences, 
\begin{align}
\label{eq:valueFunction}
&V(x,s,p)=  \sup_a\sum_{t=1}^\infty \delta^t \mathbb E\left[f(a_t,X_t,\vec S_t,P_t)\Big| X_0=x,\vec S_1=s,P_1=p\right]\\
\nonumber
&~~~~\hbox{s.t.}\\
\nonumber
&f(a_t,X_t,\vec S_t,P_t)=\underbrace{\mu^\top \Psi_t X_t-\frac{\epsilon q(\vec S_t,P_t)}{2}a_t^\top \Psi_t a_t}_{\hbox{expected return}} -\underbrace{\frac{\gamma}{2}X_t^\top P_t X_t}_{\hbox{risk penalty}}\\
\nonumber
&X_t=X_{t-1} + a_t\  
\end{align}
where $0\leq\delta<1$ is a discount factor, $\epsilon>0$ is a (small) parameter, and the function $q:\mathbb R^{ n}\times \mathbb R^{n\times n}\rightarrow \mathbb R^+$ is a transaction cost (or liquidity penalty) that is higher at times when it is harder to trade and lower at times when there is plenty of liquidity. This form of transaction cost penalty was introduced in \cite{almgren2001optimal} and the relationship with volatility was shown in \cite{capponi2019trade}.
\begin{condition}
    \label{cond:q_bound}
    We assume $q$ is bounded away from zero,
    \begin{equation}
        \label{eq:qBound}
        1\leq q(s,p)\leq \goodchi<\infty\qquad\forall (s,p)\in\mathbb R^n\times\mathbb R^{n\times n} \,
    \end{equation}
where $\goodchi$ is a known constant. 
\end{condition}
\noindent
As mentioned earlier, the penalty on trading should depend on the instantaneous value of the covariance matrix. Many works use principal component analysis (PCA) to study the relationship between degrees of freedom in the stock returns' covariance matrix and overall market volatility. For example, \cite{avellaneda2010statistical} observes that relatively few eigenvectors are needed to capture the majority of market variance during times of high market stress, thus resulting in wider bid/ask spreads and higher transaction costs; the relationship is reversed during times of low market stress. Based on this dynamic, we use the condition number to excite the transaction cost function when the market has reduced degrees of freedom. In the examples, we take $q(s,p) = \hbox{cond}(\hbox{diag}^{-1}(s)p\hbox{diag}^{-1}(s))$, which is based on the empirical observation that losses in the S\&P500 market index occur when there is a large spike in the condition number of the covariance matrix, as shown in Fig.~\ref{condition}.  Additionally, we take $\gamma=\frac{1}{W_0}\sum_i(\overline\Sigma^{-1}\mu)^i$ where $\overline{\Sigma}$ is the covariance matrix estimated from market data and $W_0$ is the value of the new capital, i.e., we set risk aversion so that the goal is for $W_0$ to be invested in the market.

\newcommand\scales{0.2}
\begin{figure}[ht!]
        \centering
        \includegraphics[scale=\scales]{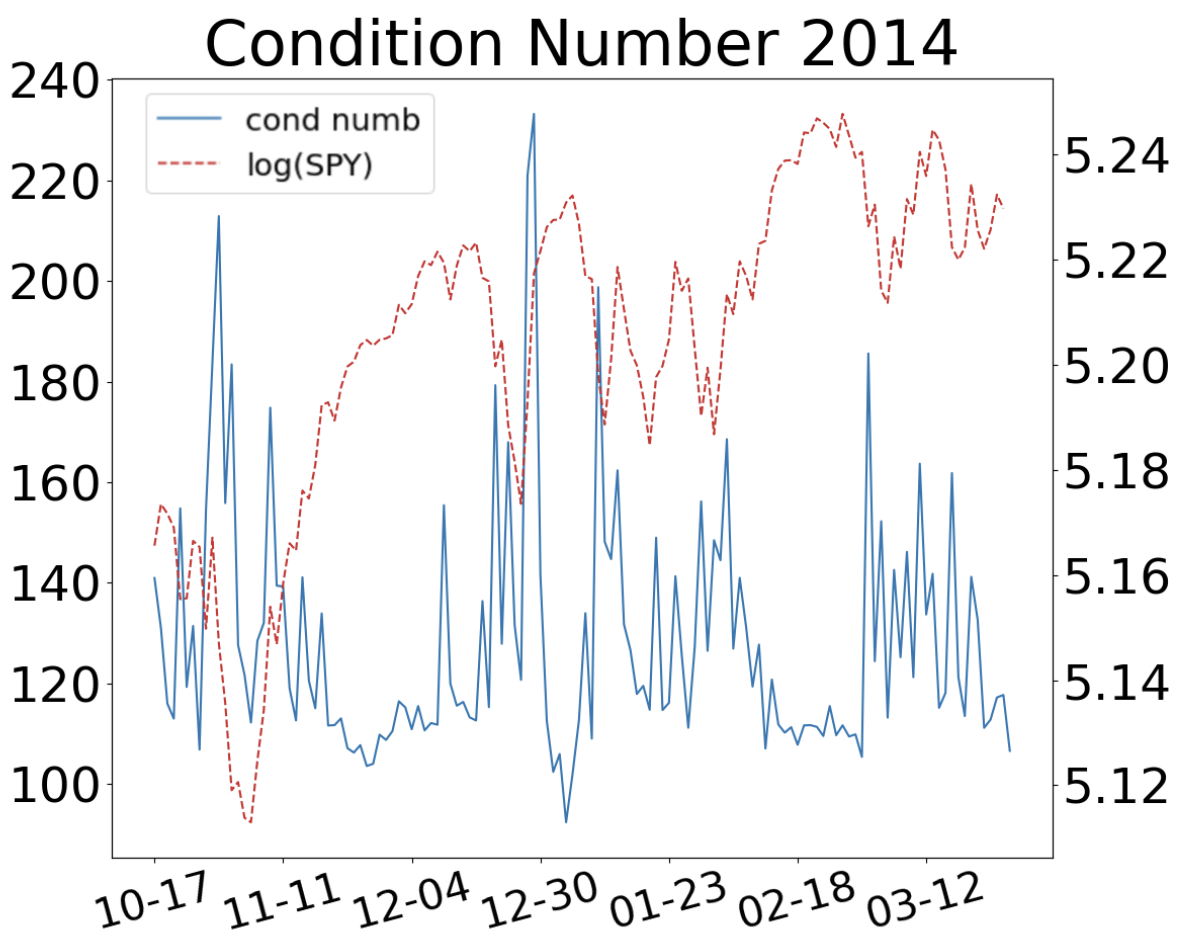}
        \hspace*{-0.10in}
        \includegraphics[scale=\scales]{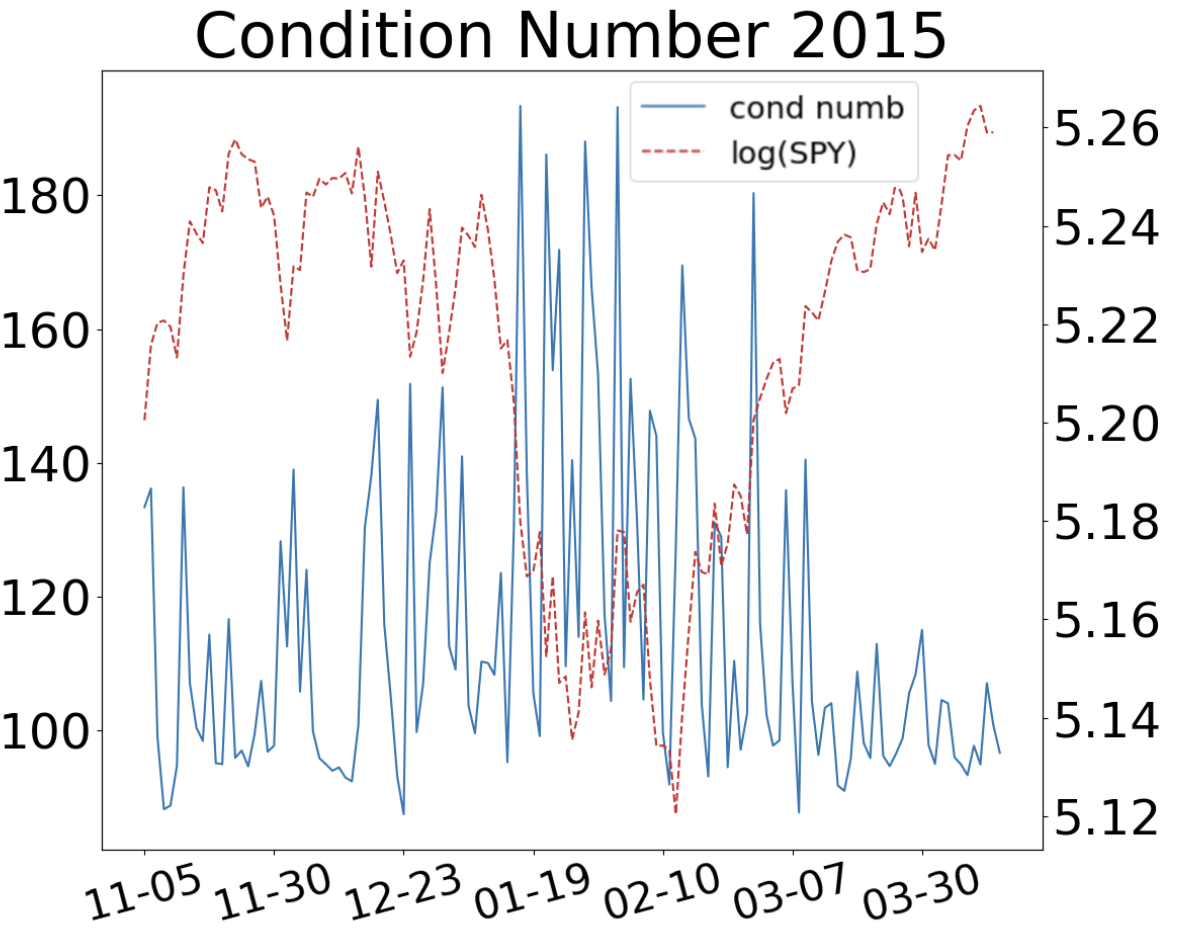}
        \caption{X-axis: the date (month - day).  Left Y-axis: condition number. Right Y-axis: log(SPY) where SPY denotes the S\&P500 index ETF. Blue plot: condition number of the covariance matrix with time. Red plot: log of SPY with time.}
        \label{condition}
\end{figure}

The problem formulated in \eqref{eq:valueFunction} is similar to the mean-variance preferences problem in \cite{garleanu2013dynamic} but with the added non-constant cost from $q(\sigma)$, where $\sigma^2 = \Sigma$ and $\Sigma$ is the estimated covariance matrix. A similar type of financial control problem was considered in \cite{malekpour2013stock}. An effective way to analyze this system is to take a Hamiltonian approach and write it using a vector of Lagrange multipliers,
\begin{align*}
	&\sum_{t=1}^\infty \delta^t\mathbb E\Big[f(a_t,X_t,\vec S_t,P_t)+\lambda_t^\top(X_t-X_{t-1}-a_t)\Big]\\
	&=\sum_{t=1}^\infty \delta^t\mathbb E\Big[f(a_t,X_t,\vec S_t,P_t)-(\delta\lambda_{t+1}-\lambda_t)^\top X_t-\lambda_t^\top a_t\Big]  +\lim_{t\rightarrow\infty}\mathbb E[\delta^{t+1} \lambda_{t+1}^\top X_t] - \lambda_1^\top X_0 \ 
\end{align*}
where we have used the transversality condition (\cite{leonard1992optimal})
$$\lim_{t\rightarrow\infty}\mathbb E(\delta^{t+1} \lambda_{t+1}^\top X_t)  = 0.$$ First-order conditions in $a_t$ and in $X_t$ yield a forward-backward system
\begin{align}
\label{eq:Xequation}
X_t& =X_{t-1} -\frac{1}{\epsilon q(\vec S_t,P_t)}\Psi_t^{-1}\lambda_t\\
\label{eq:lambdaEquation}
\lambda_t&= \delta\mathbb E_t \lambda_{t+1} - \Psi_t\mu+\gamma P_tX_t\ ;
\end{align}
where $\mathbb E_t$ denotes expectation conditional on the information observed up to time $t$. 
However, in the real world, \eqref{eq:Xequation} and \eqref{eq:lambdaEquation} cannot be directly solved. Therefore, in this paper, we propose an iteration scheme for \eqref{eq:Xequation} and \eqref{eq:lambdaEquation}:
\begin{align}
    \nonumber
    X_t^{(k+1)}&= X_{t-1}^{(k+1)}-\frac{1}{\epsilon q(\vec S_t,P_t)}\Psi_t^{-1}\lambda_t^{(k)}\\
    \label{eq:iterationScheme}
    \lambda_t^{(k+1)}&=\left(I+\widetilde P_t\Psi_t^{-1}\right)^{-1}\left(\delta\mathbb E_t\lambda_{t+1}^{(k+1)} -\Psi_t\mu+\gamma P_tX_{t-1}^{(k+1)}\right)
\end{align}
where $\widetilde P_t = \frac{\gamma}{\epsilon q(\vec S_t,P_t)}P_t$ and $X_0^{(k)} = x_0$ for all rounds of iteration $k$. Then, we implement RL using neural networks (NNs) to estimate the limiting fixed point from \eqref{eq:iterationScheme}. The convergence of iterations in \eqref{eq:iterationScheme} depends on if the following condition holds:
\begin{condition}
    \label{cond:damping}
    There is a constant $\Delta(\epsilon)<1$ such that
    \begin{equation*}
    \delta\left\|\left(I+\widetilde P_t\Psi_t^{-1}\right)^{-1}\right\|\leq \Delta(\epsilon)\ , 
\end{equation*}
for all $t$.
\end{condition}

If Condition \ref{cond:damping} holds, then we can prove that \eqref{eq:iterationScheme} converges to a unique fixed point. Condition \ref{cond:damping} would always hold if $P_t$ and $\Psi_t$ commute, but this is an unrealistic condition. For the data used in this paper, we check empirically that in fact Condition \ref{cond:damping} holds for $\epsilon<1$. For theoretical purposes, because we are considering the small-$\epsilon$ parameterization, the following proposition is useful for confirming Condition \ref{cond:damping}:

\begin{proposition}
    \label{prop:deltaEps}
    Assume Condition \ref{cond:C}, Condition \ref{cond:S_LB}  and Condition \ref{cond:q_bound}. If $\epsilon \goodchi\|h\|_\infty<\gamma\underbar s^2c$, then denoting $\kappa = \goodchi\|h\|_\infty/(\gamma\underbar s^2c)$, we have       \begin{equation}
            \label{eq:smallEpsBound}
            \sup_t\left\|\left(I+\widetilde P_t\Psi_t^{-1}\right)^{-1}\right\|\leq  \frac{\epsilon \kappa}{1-\epsilon^2\kappa^2}\ ,
        \end{equation}
    from which it follows that Condition \ref{cond:damping} will hold for $\epsilon$ small enough.
\end{proposition}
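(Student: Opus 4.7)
The plan is to exploit the $1/\epsilon$ factor hidden inside $\widetilde P_t=\tfrac{\gamma}{\epsilon q(\vec S_t,P_t)}P_t$, which makes this matrix large in the positive-definite sense when $\epsilon$ is small and thereby forces $(I+\widetilde P_t\Psi_t^{-1})^{-1}$ to be of order $\epsilon$. The cleanest manipulation is the algebraic identity
\[
(I+\widetilde P_t\Psi_t^{-1})^{-1} \;=\; \Psi_t\,(\Psi_t+\widetilde P_t)^{-1},
\]
obtained from $I+\widetilde P_t\Psi_t^{-1}=(\Psi_t+\widetilde P_t)\Psi_t^{-1}$. Submultiplicativity of the operator norm then reduces the proof to bounding $\|\Psi_t\|$ and $\|(\Psi_t+\widetilde P_t)^{-1}\|$ separately.

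Condition~\ref{cond:S_LB} gives $\|\Psi_t\|\le\|h\|_\infty$ uniformly in $t$. For the second factor, since $\Psi_t\succeq\underbar s\,I\succeq 0$, the Loewner inequality $\Psi_t+\widetilde P_t\succeq\widetilde P_t$ yields $\|(\Psi_t+\widetilde P_t)^{-1}\|\le 1/\lambda_{\min}(\widetilde P_t)$, so everything reduces to a uniform lower bound on $\lambda_{\min}(\widetilde P_t)$. From the MGARCH recursion \eqref{eq:volMatrix} together with Condition~\ref{cond:C}, every summand beyond $CC^\top$ is PSD, so $\Sigma_t\succeq CC^\top\succeq c\,I$ uniformly in $t\ge 1$; combined with $\|\Psi_t v\|\ge\underbar s$ for unit $v$ (Condition~\ref{cond:S_LB}) and $q\le\goodchi$ (Condition~\ref{cond:q_bound}), we get
\[
v^\top\widetilde P_t v \;=\; \frac{\gamma}{\epsilon\,q(\vec S_t,P_t)}(\Psi_t v)^\top\Sigma_t(\Psi_t v) \;\ge\; \frac{\gamma c}{\epsilon\goodchi}\|\Psi_t v\|^2 \;\ge\; \frac{\gamma c\,\underbar s^2}{\epsilon\goodchi}.
\]
Therefore $\|(\Psi_t+\widetilde P_t)^{-1}\|\le \epsilon\goodchi/(\gamma c\,\underbar s^2)$, and combining,
\[
\sup_t\|(I+\widetilde P_t\Psi_t^{-1})^{-1}\| \;\le\; \|h\|_\infty\cdot \frac{\epsilon\goodchi}{\gamma c\,\underbar s^2} \;=\; \epsilon\kappa.
\]
Since $\epsilon\kappa\le \epsilon\kappa/(1-\epsilon^2\kappa^2)$ whenever $\epsilon\kappa\in(0,1)$, this implies \eqref{eq:smallEpsBound}. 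Condition~\ref{cond:damping} then follows by taking $\Delta(\epsilon)=\delta\cdot\epsilon\kappa/(1-\epsilon^2\kappa^2)$, which is strictly less than $1$ for all sufficiently small $\epsilon$.

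The only real obstacle is to marshal the three conditions correctly into the uniform-in-$t$ lower bound $\lambda_{\min}(\widetilde P_t)\ge \gamma c\,\underbar s^2/(\epsilon\goodchi)$; once the $1/\epsilon$ prefactor is isolated, the remaining algebra is elementary. An alternative route is to bound $\|M^{-1}\|\le\epsilon\kappa$ directly for $M=\widetilde P_t\Psi_t^{-1}$ and invoke the Neumann expansion $(I+M)^{-1}=\sum_{k\ge 1}(-1)^{k-1}M^{-k}$, but that yields the strictly looser estimate $\epsilon\kappa/(1-\epsilon\kappa)$. I would prefer the factored identity above because it is the most efficient and in fact delivers the strictly stronger bound $\epsilon\kappa$, of which the proposition's stated bound is an immediate consequence.
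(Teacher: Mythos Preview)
Your proof is correct and takes a genuinely different route from the paper's. The paper applies the Sherman--Woodbury--Morrison identity
\[
(I+\widetilde P_t\Psi_t^{-1})^{-1}=\Psi_t\widetilde P_t^{-1}-\Psi_t\widetilde P_t^{-1}(I+\Psi_t\widetilde P_t^{-1})^{-1}\Psi_t\widetilde P_t^{-1},
\]
bounds $\|\Psi_t\widetilde P_t^{-1}\|\le\epsilon\kappa$ via $\|P_t^{-1}\|\le 1/(\underbar s^2 c)$, and then solves the resulting recursive inequality to land exactly on $\epsilon\kappa/(1-\epsilon^2\kappa^2)$. You instead use the single-step factorization $(I+\widetilde P_t\Psi_t^{-1})^{-1}=\Psi_t(\Psi_t+\widetilde P_t)^{-1}$ together with the Loewner bound $\Psi_t+\widetilde P_t\succeq\widetilde P_t$ and a direct lower bound on $\lambda_{\min}(\widetilde P_t)$, arriving at the strictly sharper estimate $\epsilon\kappa$ and then observing that this dominates the stated bound. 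Both arguments draw on the same raw ingredients (Conditions~\ref{cond:C}, \ref{cond:S_LB}, \ref{cond:q_bound} feeding the bounds on $\Sigma_t^{-1}$, $\Psi_t$, $\Psi_t^{-1}$, $q$), but your route is shorter and more elementary, while the paper's route explains the particular algebraic form $\epsilon\kappa/(1-\epsilon^2\kappa^2)$ appearing in the statement.
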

\begin{proof}(see Appendix).
\end{proof}
\subsection{Two-Step Iteration Scheme}
\label{sec:2stepPointScheme}
The scheme in \eqref{eq:iterationScheme} is the basis for an iterative algorithm with two steps. The policy is given by $\lambda^{(k)}$ and is used to generate $X^{(k+1)}$. Then, upon observation of $X^{(k+1)}$, the updated Lagrange multiplier $\lambda^{(k+1)}$ is obtained via a fixed-point iteration,
\begin{align}
    \label{eq:lambda_kPrime}
    &\lambda_t^{(k'+1)}=\left(I+\widetilde P_t\Psi_t^{-1}\right)^{-1}\left(\delta\mathbb E_t\lambda_{t+1}^{(k')} -\Psi_t\mu+\gamma P_tX_{t-1}^{(k+1)}\right)\ ,
\end{align}
for fixed $k$ and for $k'\rightarrow \infty$. Algorithmically, this can be described as a 2-step iterative procedure for finding a fixed point:
\begin{enumerate}
    \item $X_t^{(k+1)} = X_{t-1}^{(k+1)}-\Psi_t^{-1}\lambda_t^{(k)}/(\epsilon q(\vec S_t,P_t))$,
    \item $\lambda_t^{(k+1)} = \lim_{k'\rightarrow\infty}\lambda_t^{(k')}$ where $\lambda_t^{(k')}$ is given by \eqref{eq:lambda_kPrime}.
\end{enumerate}
 If we consider a finite-time version of \eqref{eq:Xequation} and \eqref{eq:lambdaEquation} with terminal condition $\lambda_{T+1}^{(k)} \equiv 0$ for all $k$ (i.e., on $X_t^{(k)}=X_{T}^{(k)}$ for all $t>T$), and given Condition \ref{cond:damping}, then we have a contraction mapping with $\lambda_t^{(k')}$ converging to a unique fixed point as $k'\rightarrow \infty$, namely, $\lambda_t^{(k+1)} $ for all $t\leq T$. 

\subsection{Convergence to a Fixed-Point}
The two-step iteration described in Section \ref{sec:2stepPointScheme} looks for a fixed point of $\lambda^{(k')}$ for a given $X^{(k+1)}$. In this section we present a theorem stating that the pair $(X^{(k)},\lambda^{(k)})$ given by \eqref{eq:iterationScheme} converges to a fixed point, thus confirming that the forward-backward system of \eqref{eq:Xequation} and \eqref{eq:lambdaEquation} has a unique solution. We prove these results for the finite-time version of \eqref{eq:Xequation} and \eqref{eq:lambdaEquation} with terminal condition $\lambda_{T+1}^{(k)} \equiv 0$ for all $k$. 

\begin{theorem}
    \label{thm:convThm}
    Consider the finite-time problem with terminal condition $\lambda_{T+1}^{(k)} \equiv 0$ for all $k$. If we assume Condition \ref{cond:C}, Condition \ref{cond:S_LB}, Condition \ref{cond:q_bound} and Condition \ref{cond:damping}, then the iterations of \eqref{eq:iterationScheme} will converge to a unique fixed point.
\end{theorem}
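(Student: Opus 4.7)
The plan is to recognize the iteration \eqref{eq:iterationScheme} as an affine operator $\mathcal{T}$ on the Banach space of $\mathcal{F}_t$-adapted sequences $\lambda = (\lambda_t)_{t=1}^T$ and to apply the Banach contraction mapping theorem. Concretely, given $\lambda$, I would define $\mathcal{T}(\lambda) = \lambda^{\mathrm{new}}$ by first propagating the forward equation $X_t = x_0 - \sum_{s \leq t}\frac{1}{\epsilon q(\vec S_s, P_s)}\Psi_s^{-1}\lambda_s$, and then solving the backward recursion for $\lambda^{\mathrm{new}}$ from the terminal condition $\lambda^{\mathrm{new}}_{T+1} = 0$ via the second line of \eqref{eq:iterationScheme} with this $X$. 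Conditions \ref{cond:C} and \ref{cond:q_bound} make each $(I + \widetilde P_t \Psi_t^{-1})$ invertible, so the backward solve is pathwise well-defined, and fixed points of $\mathcal{T}$ are exactly the solutions of the coupled system \eqref{eq:Xequation}--\eqref{eq:lambdaEquation}.

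For two inputs $\lambda, \mu$, set $\xi_t = X_t^{\lambda} - X_t^{\mu}$ and $\ell_t = \mathcal{T}(\lambda)_t - \mathcal{T}(\mu)_t$; the affine structure gives
\begin{equation*}
\xi_t = -\sum_{s=1}^t \frac{\Psi_s^{-1}}{\epsilon q(\vec S_s, P_s)}(\lambda_s - \mu_s), \qquad \ell_t = \bigl(I + \widetilde P_t \Psi_t^{-1}\bigr)^{-1}\bigl(\delta\,\mathbb{E}_t \ell_{t+1} + \gamma P_t \xi_{t-1}\bigr), \quad \ell_{T+1} = 0 .
\end{equation*}
Applying Condition \ref{cond:damping} pathwise yields the single-step estimate $\|\ell_t\| \leq \Delta(\epsilon)\,\mathbb{E}_t\|\ell_{t+1}\| + (\gamma \Delta(\epsilon)/\delta)\|P_t\|\,\|\xi_{t-1}\|$, and a backward induction from $t=T$ unrolls this to $\mathbb{E}\|\ell_t\| \leq \sum_{s=t}^T \Delta(\epsilon)^{s-t+1}(\gamma/\delta)\,\mathbb{E}[\|P_s\|\,\|\xi_{s-1}\|]$. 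Using Conditions \ref{cond:S_LB} and \ref{cond:q_bound} to bound $\|\Psi_s^{-1}\|\leq 1/\underbar s$ and $q\ge 1$ in the formula for $\xi$ then produces a linear estimate $\|\mathcal{T}(\lambda) - \mathcal{T}(\mu)\| \leq c\,\|\lambda - \mu\|$ in a suitably weighted norm, which is the contraction needed for the Banach theorem.

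The main obstacle is calibrating the norm so that the constant $c$ is strictly less than one. The damping factor $\Delta(\epsilon) < 1$ from Condition \ref{cond:damping} supplies clean geometric decay in the backward $t$-direction, but the forward step amplifies $\lambda$ by a factor of order $T/(\epsilon\underbar s)$ and couples it to the moments of $\|P_t\|$, which are not deterministically controlled by Conditions \ref{cond:C}--\ref{cond:damping} alone. I would handle this by working in an $L^2(\Omega)$ norm weighted by a geometric factor in $t$ tuned to $\Delta(\epsilon)$, using moment bounds on the MGARCH recursion \eqref{eq:volMatrix} valid on the finite horizon $[1,T]$ to control $\mathbb{E}\|P_s\|^2$. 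With such a norm the compounded estimate above becomes a strict contraction, and Banach's theorem delivers convergence of $\{\lambda^{(k)}\}$ to a unique fixed point; the paired sequence $\{X^{(k)}\}$ then converges via the continuous forward map, which completes the proof of the theorem.
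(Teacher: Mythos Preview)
Your setup is fine, but two concrete gaps prevent the argument from closing under the stated hypotheses.

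First, in your single-step estimate you split the product and bound $\bigl\|(I+\widetilde P_t\Psi_t^{-1})^{-1}\gamma P_t\,\xi_{t-1}\bigr\|$ by $(\gamma\Delta(\epsilon)/\delta)\,\|P_t\|\,\|\xi_{t-1}\|$. This throws away a cancellation that is essential: the paper keeps the factors together and proves (Lemma~\ref{lemma:boundedMatrices}) that
\[
\bigl\|(I+\widetilde P_t\Psi_t^{-1})^{-1}P_t\bigr\|\ \le\ \tfrac{\epsilon}{\gamma}\,\goodchi\,\|h\|_\infty,
\]
so that the combination $\gamma\,\bigl\|(I+\widetilde P_t\Psi_t^{-1})^{-1}P_t\bigr\|\cdot\bigl\|\tfrac{1}{\epsilon q}\Psi_{t'}^{-1}\bigr\|$ collapses to the uniform constant $\Omega=\goodchi\|h\|_\infty/\underbar s$, with no $\|P_t\|$ and no $1/\epsilon$. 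Your route instead forces you to control $\mathbb{E}\|P_t\|^2$ from the MGARCH recursion, which is not guaranteed by Conditions~\ref{cond:C}--\ref{cond:damping}.

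Second, even after replacing your bound by the correct constant $\Omega$, the one-step Lipschitz constant you would obtain is of order $\Omega T/(1-\Delta(\epsilon))$, and nothing in the hypotheses makes this $<1$. A geometric weight in $t$ tuned to $\Delta(\epsilon)$ handles the backward direction but not the forward accumulation $\xi_{t-1}=-\sum_{s\le t-1}\cdots$. The paper does \emph{not} prove a one-step contraction. It writes the vector of errors $\mathcal E_{1:T}^{(k)}$ and shows $\mathcal E_{1:T}^{(k+1)}\le (1-\Delta(\epsilon))^{-1}N_T\,\mathcal E_{1:T}^{(k)}$ where $N_T$ is the strictly lower-triangular matrix of $\Omega$'s; because $N_T$ is nilpotent with $\|N_T^k\|\le \Omega^kT^{k+1/2}/k!$ (the standard iterated-integral/Picard bound), one gets $\mathcal E_{1:T}^{(k+1)}=0$ for $k\ge T$. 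The mechanism is the $1/k!$ from the triangular structure, not a weighted-norm contraction, and that is the missing idea in your plan.
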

\begin{proof} (see Appendix).
\end{proof}
\begin{remark}
    Accuracy of the approximation of the infinite-time problem by a finite-time problem can be proven with optimality bounds and a squeeze lemma as $T\rightarrow \infty$.
\end{remark}

The main idea in the proof of Theorem \ref{thm:convThm} is to show a contraction in $\mathbb E\|\lambda_t^{(k+1)}-\lambda_t^{(k)}\|$, thus confirming the existence of a unique fixed point. The initial step for setting up the proof is to write the following forward equation for the iteration difference,
\begin{align}
    \label{eq:fullyForwardForm}
    &\lambda_t^{(k+1)}-\lambda_t^{(k)}=\delta(I+\widetilde P_t\Psi_t^{-1})^{-1}\mathbb E_t\left[ \lambda_{t+1}^{(k+1)}-\lambda_{t+1}^{(k)}\right] -\gamma (I+\widetilde P_t\Psi_t^{-1})^{-1} P_t\left(\sum_{t'=1}^{t-1}\frac{1}{\epsilon q(\vec S_{t'},P_t)}\Psi_{t'}^{-1}(\lambda_{t'}^{(k)}-\lambda_{t'}^{(k-1)})\right) \
\end{align}
which is derived from \eqref{eq:iterationScheme} by differencing between iteration $k+1$ and $k$. The following quantity is important for convergence,
\begin{align}
    \label{eq:Delta_and_Omega}
    \Omega &= \frac{ \goodchi\|h\|_\infty}{\underbar s}\ .
\end{align}
 The quantity in \eqref{eq:Delta_and_Omega} is important because from \eqref{eq:fullyForwardForm} we obtain the following inequality,
\begin{align*}
    \sup_{t\leq T} &\mathbb E\left\|\lambda_t^{(k+1)}-\lambda_t^{(k)} \right\| \leq \frac{\Omega }{1-\Delta(\epsilon)}\sup_{t\leq T}\sum_{t'=1}^{t-1}\mathbb E\left\|\lambda_{t'}^{(k)}-\lambda_{t'}^{(k-1)}\right\|\ ,
\end{align*}
An equivalent form of this inequality appears in Theorem \ref{thm:convThm} and is used to show that $\mathbb E\left\|\lambda_t^{(k+1)}-\lambda_t^{(k)}\right\|\rightarrow 0$ as $k\rightarrow \infty$ for $t\leq T$, thus proving the existence of a unique fixed point of \eqref{eq:iterationScheme} for finite $T$.

\subsection{Neural Network Policy Approximation}
The machine learning approach to solve \eqref{eq:iterationScheme} is to implement RL using neural networks (NNs). It amounts to estimating the limiting fixed point $\lambda^*$ from \eqref{eq:iterationScheme} as
\[\lambda_t^*\approx \lambda(t,X_{t-1},\vec S_t,P_t;\theta)\ \]
where $\lambda(\cdot,\cdot,\cdot,\cdot;\theta)$ is a policy approximation function with a feed-forward  NN, and $\theta$ denotes the NN parameters to be estimated. The NN takes into account time $t$ so that the solution can be adapted to the time remaining until terminal time $T$. An example of an NN that can accommodate time dependence is the Deep BSDE architecture in \cite{han2018deepBSDE}.

Given an initial estimate $\theta^{(0)}$, we proceed to iteratively look for $\theta$ that is close to a fixed point. The following iterative estimation scheme is the basis for the algorithm we'll implement: 
\begin{align}
    \label{eq:implicitScheme}
    &\theta^{(k+1)}= \argmin_\theta~\sum_{t=1}^T\mathbb E\Big\|\lambda(t,X_{t-1}^{(k+1)},\vec S_t,P_t;\theta) - Y_t^{(k)}(\theta) \Big\|^2\hbox{ for } \\
    \nonumber
    &\hspace{1cm}\hbox{s.t.}\\
    \nonumber
    &X_t^{(k+1)}= X_{t-1}^{(k+1)} -\frac{1}{\epsilon q(\vec S_t,P_t)}\Psi_t^{-1}\lambda(t,X_{t-1}^{(k+1)}, \vec S_t,P_t;\theta^{(k)})\\
    \nonumber
    &Y_t^{(k)}(\theta)=\delta\left(I+\widetilde P_t\Psi_t^{-1}\right)^{-1}  \mathbb  E_t\Big[\lambda(t+1,X_t^{(k+1)},\vec S_{t+1},P_{t+1};\theta)\Big]\mathbbm{1}_{t<T}-\left(I+\widetilde P_t\Psi_t^{-1}\right)^{-1}\Bigg(\Psi_t\mu -\gamma P_tX_{t-1}^{(k+1)}\Bigg)\,
\end{align}
where the indicator $\mathbbm{1}_{t<T}$ is used to enforce the finite-time problem's terminal condition of $\lambda_{T+1}=0$. 
The following theorem proves the NN scheme in \eqref{eq:implicitScheme} will in fact be a decent approximation for the fixed point $\lambda_t$.

\begin{theorem}
    \label{thm:NN_fixed_point}
    Assume Condition \ref{cond:C}, Condition \ref{cond:S_LB} and Condition \ref{cond:q_bound}. Furthermore, assume the family $$(\lambda(t,X_{t-1}^{(k )},\vec S_t,P_t;\theta^{(k)}))_{k=1,2,\dots}$$ are continuous and uniformly integrable, i.e., for any $\eta>0$ there is compact set $\mathbf K\subset \mathbb R^d\times \mathbb R^d\times \mathbb R^{d\times d}$ on which we have
    \begin{align}
        \label{eq:UI}
        \sup_k\mathbb E\Big[\|\lambda(t,X_{t-1}^{(k    )},\vec S_t,P_t;\theta^{(k)})\| \mathbbm{1}_{\{(X_{t-1}^{(k)},\vec S_t,P_t)\notin \mathbf K\}}\Big]\leq \eta\ .
    \end{align}
    Assume also that $$\mathbb E\Big[\|\lambda^*(t,X_{t-1}^*,\vec S_t,P_t)\|\mathbbm{1}_{\{(X_{t-1}^*,\vec S_t,P_t)\notin \mathbf K\}}\Big]\leq \eta.$$ For each $k$ let $\varepsilon^{(k)}$ denote the error from the NN approximation,
    \begin{align*}
        \sup_{t\leq T} &\mathbb E\Big[\left\|\lambda(t,X_{t-1}^{(k)},\vec S_t,P_t;\theta^{(k)}) - Y_t^{(k-1)}(\theta^{(k)})\right\| \mathbbm{1}_{\{(X_{t-1}^{(k)},\vec S_t,P_t)\in \mathbf K\}}\Big]\leq  \varepsilon^{(k)}\ .
    \end{align*}
    Then, the error of the iteration scheme in \eqref{eq:implicitScheme} is
    \begin{align}
        \label{eq:NNbound}
        \sup_{t\leq T} &\mathbb E\left\|\lambda(t,X_{t-1}^{(k+1)},\vec S_t,P_t;\theta^{(k+1)})-\lambda_t^*\right\| =\mathcal O\left(\left(\sup_\ell\varepsilon^{(\ell)}+2\eta\right)\exp\Big(\frac{\Omega T}{1-\Delta(\epsilon)}\Big)\right)\ ,
    \end{align} 
for $k$ large, where $\lambda^*$ is the fixed point of \eqref{eq:iterationScheme}.
\end{theorem}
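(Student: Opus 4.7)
The plan is to parallel the proof of Theorem \ref{thm:convThm}, except that the iteration differences $\lambda^{(k+1)}-\lambda^{(k)}$ are replaced throughout by the NN--versus--fixed-point gap, and an extra source term for the per-step NN training error is propagated along the recursion. Abbreviate $\widehat\lambda_t^{(k)} := \lambda(t,X_{t-1}^{(k)},\vec S_t,P_t;\theta^{(k)})$ and $e_t^{(k)} := \mathbb E\|\widehat\lambda_t^{(k)} - \lambda_t^*\|$. The first step is the triangle split
\[
\widehat\lambda_t^{(k+1)} - \lambda_t^* \;=\; \bigl[\widehat\lambda_t^{(k+1)} - Y_t^{(k)}(\theta^{(k+1)})\bigr] \;+\; \bigl[Y_t^{(k)}(\theta^{(k+1)}) - \lambda_t^*\bigr].
\]
For the first bracket, the expectation is split over $\mathbf K$ and its complement: on $\mathbf K$ the training hypothesis controls it by $\varepsilon^{(k+1)}$, and off $\mathbf K$ the uniform integrability \eqref{eq:UI} for $\widehat\lambda^{(k+1)}$, together with the companion bound on $\|\lambda^*\|\mathbbm{1}_{\mathbf K^c}$, contributes $2\eta$. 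For the second bracket, I substitute the fixed-point identity for $\lambda^*$ obtained by passing to the limit in \eqref{eq:iterationScheme}, and then subtract from the definition of $Y_t^{(k)}(\theta^{(k+1)})$.

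This subtraction produces the forward-cumulative structure that drives Theorem \ref{thm:convThm}: the trajectory gap $X_{t-1}^{(k+1)}-X_{t-1}^*$ unrolls via \eqref{eq:implicitScheme} and the corresponding dynamics of $X^*$ as a telescoping sum of NN residuals at iteration $k$, supplying the $\sum_{t'<t}$ term. Bounding $\delta\|(I+\widetilde P_t\Psi_t^{-1})^{-1}\|\leq \Delta(\epsilon)$ by Condition \ref{cond:damping}, and combining Conditions \ref{cond:C}, \ref{cond:S_LB}, \ref{cond:q_bound} to bound the compound prefactor $\gamma\|P_t\|\|\Psi_{t'}^{-1}\|/(\epsilon q)$ by the constant $\Omega$ of \eqref{eq:Delta_and_Omega}, I obtain
\[
e_t^{(k+1)} \;\leq\; \varepsilon^{(k+1)} + 2\eta + \Delta(\epsilon)\, e_{t+1}^{(k+1)} + \Omega\sum_{t'=1}^{t-1} e_{t'}^{(k)}.
\]
The indicator $\mathbbm{1}_{t<T}$ inside $Y_t^{(k)}$ enforces $\lambda_{T+1}^*=0$, so I can iterate this backward from $t=T$, summing the geometric series in $\Delta(\epsilon)^{T-t}$ to absorb the middle term and get
\[
e_t^{(k+1)} \;\leq\; \frac{\varepsilon^{(k+1)}+2\eta}{1-\Delta(\epsilon)} + \frac{\Omega}{1-\Delta(\epsilon)}\sum_{t'=1}^{t-1} e_{t'}^{(k)}.
\]

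Taking $\limsup_k$ on both sides, and writing $f_t := \limsup_k e_t^{(k)}$, $\bar\varepsilon := \sup_\ell\varepsilon^{(\ell)}$, produces the discrete Gronwall-type inequality $f_t \leq (\bar\varepsilon+2\eta)/(1-\Delta(\epsilon)) + (\Omega/(1-\Delta(\epsilon)))\sum_{t'<t} f_{t'}$, whose standard solution is $f_t\leq [(\bar\varepsilon+2\eta)/(1-\Delta(\epsilon))]\exp\bigl(\Omega(t-1)/(1-\Delta(\epsilon))\bigr)$; supping over $t\leq T$ then yields \eqref{eq:NNbound}. The main obstacle will be the careful bookkeeping around the compact set $\mathbf K$: the NN training error is quantified only on $\mathbf K$, so expectations must be split consistently and the uniform-integrability hypothesis \eqref{eq:UI} deployed on $\mathbf K^c$ whenever the integrand contains an NN output (rather than the fixed point $\lambda^*$, which is handled by the companion assumption). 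A secondary subtlety is that $X^{(k+1)}$ in \eqref{eq:implicitScheme} is generated using the prior weights $\theta^{(k)}$, not $\theta^{(k+1)}$, so the forward cumulative term carries iteration index $k$ while the conditional-expectation term carries $k+1$; this mismatch is exactly what is resolved by passing to the $\limsup$, which is what the ``$k$ large'' qualifier in \eqref{eq:NNbound} refers to.
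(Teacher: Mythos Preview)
Your setup mirrors the paper exactly: the triangle split over $\mathbf K$ and $\mathbf K^c$, the use of the fixed-point identity for $\lambda^*$, the unrolling of $X_{t-1}^{(k+1)}-X_{t-1}^*$ into a cumulative sum, and the resulting recursive inequality
\[
e_t^{(k+1)}\;\leq\;\frac{\varepsilon^{(k+1)}+2\eta}{1-\Delta(\epsilon)}\;+\;\frac{\Omega}{1-\Delta(\epsilon)}\sum_{t'=1}^{t-1}e_{t'}^{(k)}
\]
are all precisely what the paper records as its ``recursive bound.'' The divergence is only in how this recursion is closed. The paper casts it as the component-wise vector inequality $\mathcal E_{1:T}^{(k+1)}\leq (1-\Delta(\epsilon))^{-1}\bigl(N_T\mathcal E_{1:T}^{(k)}+c\,\mathbf 1\bigr)$ with $N_T$ the strictly-triangular matrix from the proof of Theorem~\ref{thm:convThm}, and then iterates in the \emph{policy index} $k$: the nilpotency $N_T^k=0$ for $k\geq T$ annihilates the homogeneous term outright, and the remaining finite sum $\sum_{i<T}(1-\Delta(\epsilon))^{-(i+1)}\|N_T^i\mathbf 1\|$ is controlled via $\|N_T^i\|\leq \Omega^iT^{i+1/2}/i!$ to produce the exponential. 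Your limsup-in-$k$ followed by discrete Gronwall in $t$ delivers the same exponent, but the paper's route has two small advantages: it does not need to assume a priori that $f_t=\limsup_ke_t^{(k)}$ is finite (which your Gronwall step tacitly uses), and it makes the theorem's qualifier ``for $k$ large'' concrete, namely $k\geq T$.
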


\begin{proof} (see Appendix).

\end{proof}

The bound in \eqref{eq:NNbound} is similar to the bounds derived in \cite{pham2018deep}, wherein the approximation error was computed to be the sum of three terms: a sampling error term, an NN parameter error, and a value function estimation error. An additional similarity is an exponential growth in the bounding constants as $T$ increases. In theory, the exponential growth in \eqref{eq:NNbound} is contained by continually increasing the hyperparameters so that $\sup_\ell\varepsilon^{(\ell)}+2\eta$ tends to zero, but in practice increasing hyperparameters requires a growing number of training samples, leading to an infeasibly long computation time. However, the smallness of $\epsilon$ can be exploited for a faster algorithm, which does not eliminate exponential growth in $T$, but we are at least able to reduce bounding constants with lowered computational cost.


\section{Small-$\epsilon$ Asymptotic Analysis \& Implementation}

Let's consider the parameterization with $\epsilon$ being small enough that effects of order $\epsilon^2$ can be dropped or grouped in with round-off error. In this setting, we can construct a solution in a power series form and then truncate terms order $\epsilon^2$ and higher. In other words, our expansion has a base term equal to the explicitly computable myopic solution (obtained for the case $\delta=0$), and an order-$\epsilon$ correction term. Computation of the order-$\epsilon$ correction is more involved, but computational costs and runtime are minimal.

\subsection{Expansion of $\lambda_t$}
\label{sec:coStateExpansion}
We write the formal expression for $\lambda_t$,
\begin{align*}
    \lambda_t &= \epsilon \widetilde \lambda_t \ ,
\end{align*}
which we insert into \eqref{eq:Xequation} and \eqref{eq:lambdaEquation} to obtain the following equations,
\begin{align}
    \label{eq:XtildeLambdaSystem}
	X_t &= X_{t-1}- \frac{1}{q(\vec S_t,P_t)}\Psi_t^{-1}\widetilde\lambda_t\\
	\label{eq:LambdatildeLambdaSystem}
	\widetilde\lambda_t& = \delta\mathbb E_t\widetilde\lambda_{t+1}-\frac{\gamma}{\epsilon }P_t\left(\frac{1}{\gamma}P_t^{-1}\Psi_t\mu - X_t\right)\\
	\nonumber
	& = \delta\mathbb E_t\widetilde\lambda_{t+1}-\frac{\gamma }{\epsilon  }P_t\left(\frac{1}{\gamma}P_t^{-1}\Psi_t\mu - X_{t-1}\right) -  \frac{\gamma }{\epsilon q(\vec S_t,P_t) }P_t\Psi_t^{-1}\widetilde\lambda_t\ .
\end{align}
Rearranging \eqref{eq:LambdatildeLambdaSystem} we have the following stabilized equation,
\begin{align}
    \label{eq:stabilized_tildeLambda}
	\widetilde\lambda_t
	 &=\left(\epsilon I+\frac{\gamma}{q(\vec S_t,P_t)} P_t\Psi_t^{-1}\right)^{-1} \left(\delta\epsilon \mathbb E_t\widetilde\lambda_{t+1}-\gamma P_t\left(\frac{1}{\gamma}P_t^{-1}\Psi_t\mu -  X_{t-1}\right)\right)\ ,
\end{align}
for which it is straightforward to check that the convergence proof of Theorem \ref{thm:convThm} still applies.
We write the following formal expansion,
\[\widetilde\lambda_t = \widetilde\lambda_t^{[0]}+\epsilon \widetilde\lambda_t^{[1]}+\epsilon^2 \widetilde\lambda_t^{[2]}+\dots\ ,\]
which we insert into \eqref{eq:stabilized_tildeLambda} to obtain the following recursive expressions for the expansion's terms,
\begin{align}
    \label{eq:expansionTerms}
    \widetilde\lambda_t^{[0]}&= -\left(\epsilon I+\frac{\gamma}{q(\vec S_t,P_t)} P_t\Psi_t^{-1}\right)^{-1}\left(\Psi_t\mu - \gamma P_t X_{t-1}\right)\\
    \nonumber
    \widetilde\lambda_t^{[i]}&=
    \delta \left(\epsilon I+\frac{\gamma}{q(\vec S_t,P_t)} P_t\Psi_t^{-1}\right)^{-1}\mathbb E_t\widetilde\lambda_{t+1}^{[i-1]}
\end{align}
for $i=1,2,3,\dots$. This expansion is in powers of $\epsilon$ and can be truncated for a good approximation of the solution to \eqref{eq:XtildeLambdaSystem}. 

\begin{remark}
When we approximate $\widetilde\lambda_t$ with lower-order terms $\widetilde\lambda_t^{[0]}$ and $\widetilde\lambda_t^{[1]}$, we can simplify the expressions in \eqref{eq:expansionTerms} so that they do not depend on $\epsilon$
\begin{align}
    \label{eq:expansionNaive}
    \widetilde\lambda_t^{[0]}&= 
    -\left(\frac{\gamma}{q(\vec S_t,P_t)} P_t\Psi_t^{-1}\right)^{-1}\left(\Psi_t\mu - \gamma P_t X_{t-1}\right)\\
    \nonumber
    \widetilde\lambda_t^{[1]}&=
    \delta \left(\frac{\gamma}{q(\vec S_t,P_t)} P_t\Psi_t^{-1}\right)^{-1}\mathbb E_t\widetilde\lambda_{t+1}^{[0]} +\left(\frac{\gamma}{q(\vec S_t,P_t)} P_t\Psi_t^{-1}\right)^{-2}\left(\Psi_t\mu - \gamma P_t X_{t-1}\right)\ ,
\end{align}
so that $\widetilde\lambda_t =\widetilde\lambda_t^{[0]}+\epsilon\widetilde\lambda_t^{[1]}+\mathcal O(\epsilon^2)$.The expansion in \eqref{eq:expansionNaive}  is different from \eqref{eq:expansionTerms} because it has reduced the base term to the naive policy that goes straight to the aim portfolio, thus leaving it to the correction terms to compensate for transaction costs.
\end{remark}
Inserting the $\widetilde\lambda_t$ expansion into \eqref{eq:XtildeLambdaSystem}, we suspect the following lower-order expansion for the state process has a Big-Oh error as follows,
\begin{align}
    \label{eq:XtildeEpsilonProcess}
    X_t &= X_{t-1}-\frac{1}{q(\vec S_t,P_t)}\Psi_t^{-1}\left(\widetilde\lambda_t^{[0]}+\epsilon \widetilde\lambda_t^{[1]}\right)+ \mathcal O(\epsilon^2)\ .
\end{align}
These order $\mathcal O(\epsilon^2)$ errors are in fact true, based on the following proposition:

\begin{proposition}
    \label{prop:expansionAccuracy}
    Assume Condition \ref{cond:C}, Condition \ref{cond:S_LB} and Condition \ref{cond:q_bound}. The order-$\epsilon$ approximation $\widetilde\lambda_t^{[0]}+\epsilon \widetilde\lambda_t^{[1]}$ has an error, which is
    \begin{align*}
        \sup_{t\leq T}&\mathbb E\left\|\widetilde\lambda_t^{[0]}+\epsilon\widetilde\lambda_t^{[1]}-\widetilde\lambda_t^*\right\|= 
\mathcal O\left(\epsilon\Delta(\epsilon)\exp\Big(\frac{\Omega T}{1-\Delta(\epsilon)}\Big)\mathbb E\sup_{t\leq T}\|\widetilde\lambda_t^{[1]}\|\right)\ ,
    \end{align*} 
    where $\widetilde\lambda^*$ denotes the solution to \eqref{eq:XtildeLambdaSystem} and \eqref{eq:LambdatildeLambdaSystem}.
\end{proposition}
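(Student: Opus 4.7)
The plan is to show that the truncated expansion $\widetilde\lambda_t^{\mathrm{appr}} := \widetilde\lambda_t^{[0]} + \epsilon\widetilde\lambda_t^{[1]}$ satisfies the same stabilized fixed-point equation \eqref{eq:stabilized_tildeLambda} as $\widetilde\lambda_t^*$ but with an extra inhomogeneous source of size $\mathcal{O}(\epsilon^2)$, and then to propagate that source through the forward--backward inequality that was already used in the proof of Theorem \ref{thm:convThm}.

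First I would abbreviate $M_t := \bigl(\epsilon I + \tfrac{\gamma}{q(\vec S_t,P_t)}P_t\Psi_t^{-1}\bigr)^{-1}$ and note the algebraic identity $\epsilon M_t = (I + \widetilde P_t\Psi_t^{-1})^{-1}$, so Condition \ref{cond:damping} yields $\delta\|\epsilon M_t\|\leq \Delta(\epsilon)$. Substituting $\widetilde\lambda_t^{\mathrm{appr}}$ into the right-hand side of \eqref{eq:stabilized_tildeLambda} and using the defining recursion \eqref{eq:expansionTerms} to cancel the $\epsilon^0$ and $\epsilon^1$ contributions leaves a single defect $-\delta\epsilon^2 M_t\mathbb E_t\widetilde\lambda_{t+1}^{[1]}$. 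Along the state path $X_t^{\mathrm{appr}}$ produced by $\widetilde\lambda^{\mathrm{appr}}$ via \eqref{eq:XtildeEpsilonProcess}, this means $\widetilde\lambda^{\mathrm{appr}}$ exactly solves a version of \eqref{eq:stabilized_tildeLambda} perturbed by this source.

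Next I would set $R_t := \widetilde\lambda_t^* - \widetilde\lambda_t^{\mathrm{appr}}$ and $\Delta X_t := X_t^* - X_t^{\mathrm{appr}}$, subtract the two stabilized equations, and substitute $\Delta X_{t-1} = -\sum_{t'=1}^{t-1}\tfrac{1}{q(\vec S_{t'},P_{t'})}\Psi_{t'}^{-1}R_{t'}$ obtained from \eqref{eq:XtildeLambdaSystem}. This produces a forward--backward equation for $R_t$ that is structurally identical to \eqref{eq:fullyForwardForm} in the proof of Theorem \ref{thm:convThm}, but with one additional inhomogeneous term $\delta\epsilon^2 M_t\mathbb E_t\widetilde\lambda_{t+1}^{[1]}$. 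Using $\delta\|\epsilon M_t\|\leq \Delta(\epsilon)$, the source has conditional norm bounded by $\epsilon\Delta(\epsilon)\mathbb E_t\|\widetilde\lambda_{t+1}^{[1]}\|$, which is the inhomogeneity that will be amplified by the coupled loop.

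Finally I would apply verbatim the bounding strategy of Theorem \ref{thm:convThm}: control the backward expectation by the factor $\Delta(\epsilon)$, collapse the forward state-coupling sum through the constant $\Omega/(1-\Delta(\epsilon))$ built from \eqref{eq:Delta_and_Omega}, and iterate the resulting discrete Gronwall-type inequality over the $T$ time steps. This produces the stated exponential amplification $\exp\!\bigl(\Omega T/(1-\Delta(\epsilon))\bigr)$ multiplying the source magnitude $\epsilon\Delta(\epsilon)\,\mathbb E\sup_{t\leq T}\|\widetilde\lambda_t^{[1]}\|$, which is exactly the claimed big-oh bound. The main obstacle I anticipate is the bookkeeping around the coupled state/co-state residual: one must cleanly separate the new $\mathcal{O}(\epsilon^2)$ source from the pre-existing Gronwall amplification without losing the crucial factor of $\epsilon$, and treat $\widetilde\lambda_t^{[1]}$ as an externally given (inhomogeneous) quantity — interpreted along $X^{\mathrm{appr}}$ — rather than as part of a self-referential fixed point.
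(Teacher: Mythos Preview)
Your proposal is correct and follows essentially the same route as the paper's own proof: identify the defect $-\delta\epsilon^2 M_t\mathbb E_t\widetilde\lambda_{t+1}^{[1]}$ left when the truncated expansion is substituted into \eqref{eq:stabilized_tildeLambda}, bound it via the identity $\epsilon M_t=(I+\widetilde P_t\Psi_t^{-1})^{-1}$ to size $\epsilon\Delta(\epsilon)\mathbb E_t\|\widetilde\lambda_{t+1}^{[1]}\|$, and then feed this inhomogeneity through the same forward--backward Gronwall machinery used in Theorems \ref{thm:convThm} and \ref{thm:NN_fixed_point}. The paper's write-up is terser (it simply says ``similar to Theorem \ref{thm:NN_fixed_point}'') but the substance is identical to what you outline.
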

\begin{proof} (see Appendix.)
\end{proof}

\subsection{Neural Network Algorithm}
\label{sec:NNalgo}

Let function $\varphi:\mathbb R^d\times \mathbb R^d\times \mathbb R^{d\times d}\rightarrow \mathbb R^n$ be from a class of NN functionals with sigmoidal activation. The term $\widetilde\lambda_t^{[1]}$ is approximated as $\delta\left(\epsilon I+\frac{\gamma}{q(\vec S_t,P_t)} P_t\Psi_t^{-1}\right)^{-1}\varphi(X_{t-1},\vec S_t,P_t;\theta)$, where the optimal parameter $\theta$ is found from
\[\min_\theta\sum_{t=1}^T\mathbb E\left\|\varphi(X_{t-1},\vec S_t,P_t;\theta) - \mathbb E_t[\widetilde\lambda_{t+1}^{[0]}]\mathbbm{1}_{t< T}\right\|^2\ .\]
Note that now we are considering a NN architecture that remains constant through time, which is a considerable simplification from the NN architecture used in the proof of Theorem \ref{thm:NN_fixed_point}. The tradeoff in making this simplification is faster computation time. Algorithm \ref{alg2} gives the implementation of the scheme in \eqref{eq:implicitScheme} using the lower-order expansion in \eqref{eq:XtildeEpsilonProcess} with this NN approximation of $\delta\mathbb E_t\widetilde \lambda_{t+1}^{[0]}$. Fig.~\ref{fig:flow} shows the flow of Alg.~\ref{alg2} at each moment $t$. We empirically observed that after training, our NN policy can work in real-time (specifically, each NN inference takes around 1.5 ms on a commodity laptop). 

In our analysis of Algorithm \ref{alg2}'s portfolio, we will compare to the purely myopic strategy, 
\begin{align}
    \label{eq:myopic}
    X_t^{\hbox{\tiny myopic}}&=X_{t-1}^{\hbox{\tiny myopic}}+ \frac{1}{\epsilon q(\vec S_t,P_t)}\Psi_t^{-1}\left(I+\widetilde P_t\Psi_t^{-1}\right)^{-1}\left(\mu - \gamma P_t X_{t-1}^{\hbox{\tiny myopic}}\right)\ .
\end{align}
The myopic strategy shares the same objective as the RL strategy but for  $\delta=0$. Note that because our objective function contains a quadratic transaction cost term, the typical buy-and-hold portfolios will create a large transaction cost at the beginning, leading to a negative total wealth return. Therefore, we do not compare our method with the  typical buy-and-hold portfolios. 

\begin{algorithm}[t!] 
 \caption{Small-$\epsilon$ Neural Network Fixed-Point Algorithm for MGARCH, with Learning Rate $\alpha\in(0,1]$}  
 \label{alg2} 
 \begin{algorithmic} 
     \STATE Initialize: $\theta^{(1)} \sim \mathcal{N}(0, 0.01)$.
     \FOR{k = 1 to MAX\_ITER}
        \STATE Initialize: $P_0$, $ \vec S_0$ and $X_0$.
        \STATE Set: $ S_0 = \hbox{diag}(\vec S_0) $.
     	\FOR{t = 1 to T}
     	\STATE \#\#\# Update the MGARCH state:
		    \STATE $Z_t = R_t-\mu$
			\STATE $\Sigma_t = CC^\top+A\Sigma_{t-1}A^\top + BZ_tZ_t^\top B^\top$ 
			\STATE $ S_t = S_{t-1}(1+R_t)$
            \STATE $\vec S_t = (S_t^1, S_t^2, ..., S_t^n)^T$
			\STATE $\Psi_t = \hbox{diag}(\vec S_t)$
            \STATE $P_t = \Psi_t \Sigma_t \Psi_t$
		\STATE \#\#\# Update the portfolio:
		\STATE $\widetilde\lambda_t^{[0]} = -\left(\epsilon I+ \frac{\gamma}{ q(\vec S_t,P_{t})}P_{t}\Psi_t^{-1}\right)^{-1}\Big(\Psi_t\mu-\gamma P_tX_{t-1}\Big)$
		\STATE $\widetilde\lambda_t^{[1]} = \delta\left(\epsilon I+ \frac{\gamma}{ q(\vec S_t,P_{t})}P_{t}\Psi_t^{-1}\right)^{-1}\varphi(X_{t-1},\vec S_t,P_t;\theta^{(k)})$
		\STATE $X_t = X_{t-1} - \frac{1}{ q(\vec S_t,P_t)}\Psi_t^{-1}\Big(\widetilde\lambda_t^{[0]}+\epsilon \widetilde\lambda_t^{[1]}\Big)$

	\ENDFOR
         \STATE $\text{loss}(\theta) = \frac{\sum_{t=1}^T\left\|\varphi(X_{t-1},\vec S_t,P_t;\theta) -  \widetilde\lambda_{t+1}^{[0]}\mathbbm{1}_{t<T}\right\|^2}{T-1}$
         \STATE $\theta^* =\argmin_\theta \text{loss}(\theta)$
         \STATE $\theta^{(k+1)} = \alpha \theta^* + (1-\alpha)\theta^{(k)}$
     \ENDFOR
 \end{algorithmic}
 \end{algorithm}

\begin{figure}
    \centering
    \includegraphics[scale=0.35]{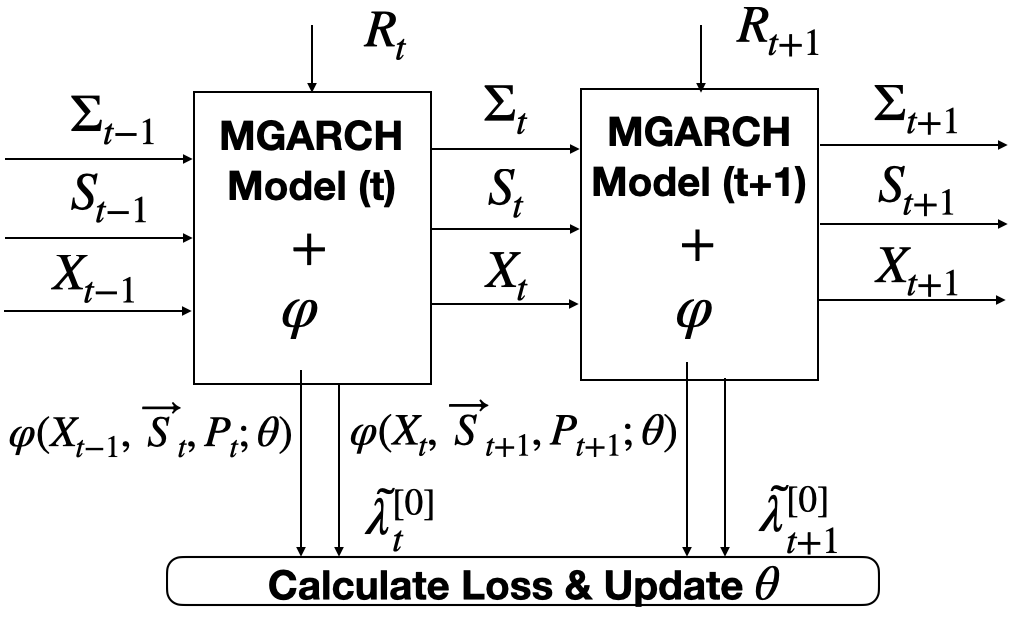}
    \caption{Flow of Alg.~\ref{alg2}}
    \label{fig:flow}
\end{figure}

\section{Sector ETFs: an 11-Dimensional Example}

We verify the small-$\epsilon$ asymptotic analysis by allocating 11 sector ETFs, including iShares U.S. Real Estate ETF (IYR), Materials Select Sector SPDR Fund (XLB), Energy Select Sector SPDR Fund (XLE), Financial Select Sector SPDR Fund (XLF), Industrial Select Sector SPDR Fund (XLI), Technology Select Sector SPDR Fund (XLK), Consumer Staples Select Sector SPDR Fund (XLP), Utilities Select Sector SPDR Fund (XLU), Health Care Select Sector SPDR Fund (XLV), Consumer Discretionary Select Sector SPDR Fund (XLY), and Vanguard Communication Services ETF (VOX). The chosen ETFs are a good representation of a cross-section of the 
U.S. stock market returns. However, our method is applicable to any set of stocks. Training of Algorithm~\ref{alg2} was implemented on synthetic data sampled from an MGARCH model that was estimated on historical data of the 11 ETFs. We will show the implementation results of Algorithm~\ref{alg2} on both simulated testing data, and on historical out-of-sample data that was not used to estimate the MGARCH model. Overall, our algorithm practically performs well. Additionally,  it is not necessary that the rigorous conditions of Sec.~\ref{sec2} have to hold, as we do not enforce Condition \ref{cond:S_LB}'s boundedness of $h$. 

\subsection{Setup}

\subsubsection{Dataset}
We downloaded the adjusted closing price of the 11-sector ETFs in the year 2010 to the year 2019 from Yahoo Finance (\cite{yf2020}). The data was split into 10 folds for training, with each fold having 5 years of historical data. The starting and ending dates for the folds are shown in Table~\ref{table:fold}. 

\begin{table}[t!]
\caption{Details of Each Fold}
\begin{center}
\begin{tabular}{ccc|ccc}
\toprule
\textbf{Fold} & \textbf{From} & \textbf{To} & \textbf{Fold} & \textbf{From} & \textbf{To}\\
\hline
1 & Jan. 2010 & Oct. 2014  &
2 & Jul. 2010 & Apr. 2015  \\
3 & Jan. 2011 & Oct. 2015  &
4 & Jul. 2011 & Apr. 2016  \\
5 & Feb. 2012 & Nov. 2016  &
6 & Aug. 2012 & May 2017  \\
7 & Mar. 2013 & Nov. 2017  &
8 & Sep. 2013 & Jun. 2018  \\
9 & Feb. 2014 & Dec. 2018  &
10 & Aug. 2014 & Jun. 2019  \\
\bottomrule
\end{tabular}
\end{center} 
\label{table:fold}
\end{table}

\subsubsection{Estimating Parameters}

Given a sequence of historical price $\{S_t^i\}$ for the $i^{th}$ ETF, the return rate $R_t^i$ at each time $t$ can be found by:
$$R_t^i = \frac{S_{t+1}^i- S_t^i}{S_t^i}.$$
Defining $R_t = [R_t^1, R_t^2, ..., R_t^{11}]^\top $ and $\Bar R$ the mean of $R_t$ over $t$, the initial covariance matrix $\Sigma_0$ is calculated as:
$$\Sigma_0 = \frac{1}{L} \sum_{t=1}^L (R_t - \Bar R)  (R_t - \Bar R)^\top,$$ where $L$ is the length of the sequential historical data.

$A$, $B$, and $C$ in (\ref{eq:volMatrix}) were estimated using  Broyden Fletcher Goldfarb Shanno (BFGS) algorithm (\cite{BFGS13}). Specifically, we first burned in the initial covariance matrix $\Sigma_0$ into  (\ref{eq:volMatrix}). We then used $\sum_{t\geq 0}||\Sigma_{t+1} - \Sigma_{t}||^2$ with matrix norm as the loss function and BFGS as the minimizer to find $A$, $B$, $C$, and $\Sigma$ to minimize the loss function.  The following equilibrium state is achieved when the loss function is the smallest:
\begin{align}
     \Sigma= CC^\top +A \Sigma A^\top + B \Sigma B^\top.
    \label{equilibrium}
\end{align}
For the estimation of $\mu$, we used the eigen-portfolio approach in \cite{boyle2014}.
The parameters (i.e., $A$, $B$, $C$, $\Sigma_0$, and $\mu$) were estimated for each fold. Fig.~\ref{price} shows an example of  the historical ETF prices and the simulated ETF prices using the estimated parameters.

\begin{figure}[ht!]
        \centering
        \includegraphics[scale=0.3]{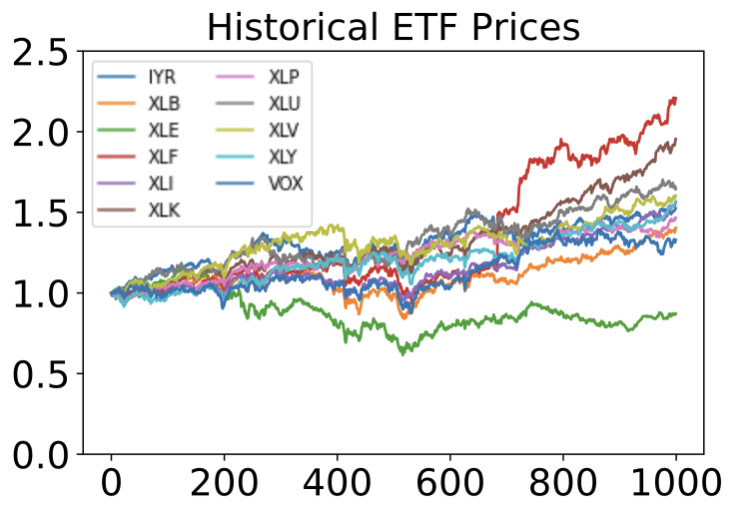}
        \includegraphics[scale=0.3]{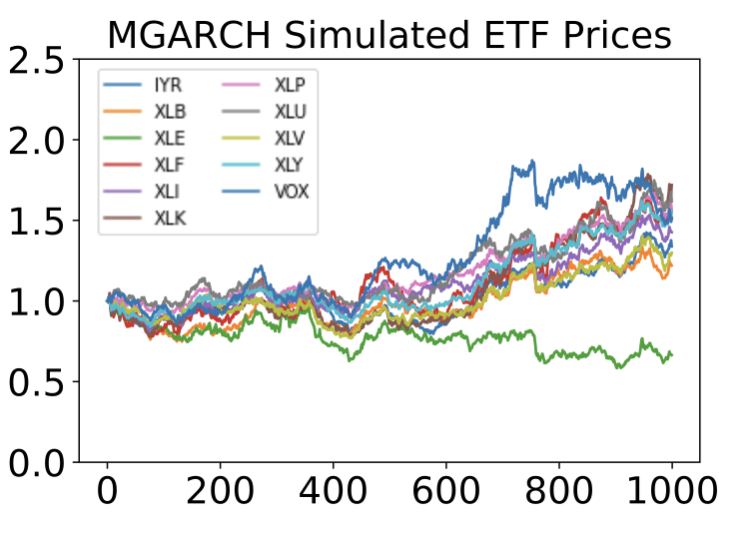}
        \caption{ X-axis shows the day. Y-axis shows the ETF prices. }
        \label{price}
\end{figure}

\subsubsection{Architecture of Neural Network}

The utilized neural network (NN) is composed of fully connected layers.   The input contains the portfolio $X_{t-1}$ that has 11 elements, plus the covariance matrix of the dollar-returns $P_t = \Psi_t\Sigma_t\Psi_t\ $ whose dimension is $11\times11$ and also the expected value of returns $\mu^\top \Psi_t$ which is also 11-dimensional. Therefore, the total dimension is 143. The hidden layer size was determined by considering both the training time and the NN performance. When using more complex NN architecture, we observed that there was no obvious improvement in performance while the training time increased considerably. On the other hand, when using even simpler NN architectures, we observed that the deep RL algorithm suffered from under-fitting problems. Therefore, we utilized four hidden layers, each of which contains 400 neurons. The output of the NN corresponds to $\varphi(\cdot,\cdot, \cdot;\theta)$ in Algorithm~\ref{alg2}, which is 11-dimensional. The activation function is Tanh.  The architecture of the utilized neural network (NN) is shown in Table~\ref{table:architecture} in the appendix shows the detail of the architecture.
The programming language is Python 3. Tensorflow and Keras were utilized as the deep-learning library. Keras is built on top of Tensorflow. 

\subsection{Simulation on Synthetic Data}
\label{Syn}
At moment $t+1$, a noise vector $Z_{t+1}$ was generated from a Gaussian distribution $\mathcal{N}(0, \Sigma_t)$, where $\Sigma_t$ is the covariance matrix at the previous moment, the return rate $R_{t+1}$ was determined by (\ref{eq:returns}), and the covariance matrix $\Sigma_{t+1}$ was found by \eqref{eq:volMatrix}. For each fold, we generated one sequence of synthetic data to train the NN and the other 200 sequences to test the performance of the NN. We show how the total wealth grows with time for RL and myopic strategies. We also show how the fund manager should invest the money into the stock market with time using RL and myopic strategies, respectively.
Specifically, wealth $W_t$ at time $t$ was calculated by
$$W_t = W_{t-1}+X_{t-1}^\top(\vec S_t - \vec S_{t-1})-\frac{\epsilon q(\vec S_t, P_t)}{2}a_t^\top \Psi_t a_t\ ,$$
with $W_0 = 100$. We ran the trained NN and the myopic strategy (as given by \eqref{eq:myopic}) on the test data for each fold and calculated the average. The training epoch was 100 for each fold. The optimizer was Adam. The loss function was mean-squared-error loss (MSE). In this case, we took $\epsilon=0.003$, for which we were able empirically to verify Condition \ref{cond:damping}. The results are shown in Fig.~\ref{plots}.
\begin{figure}[ht!]
        \centering
        \includegraphics[scale = 0.4]{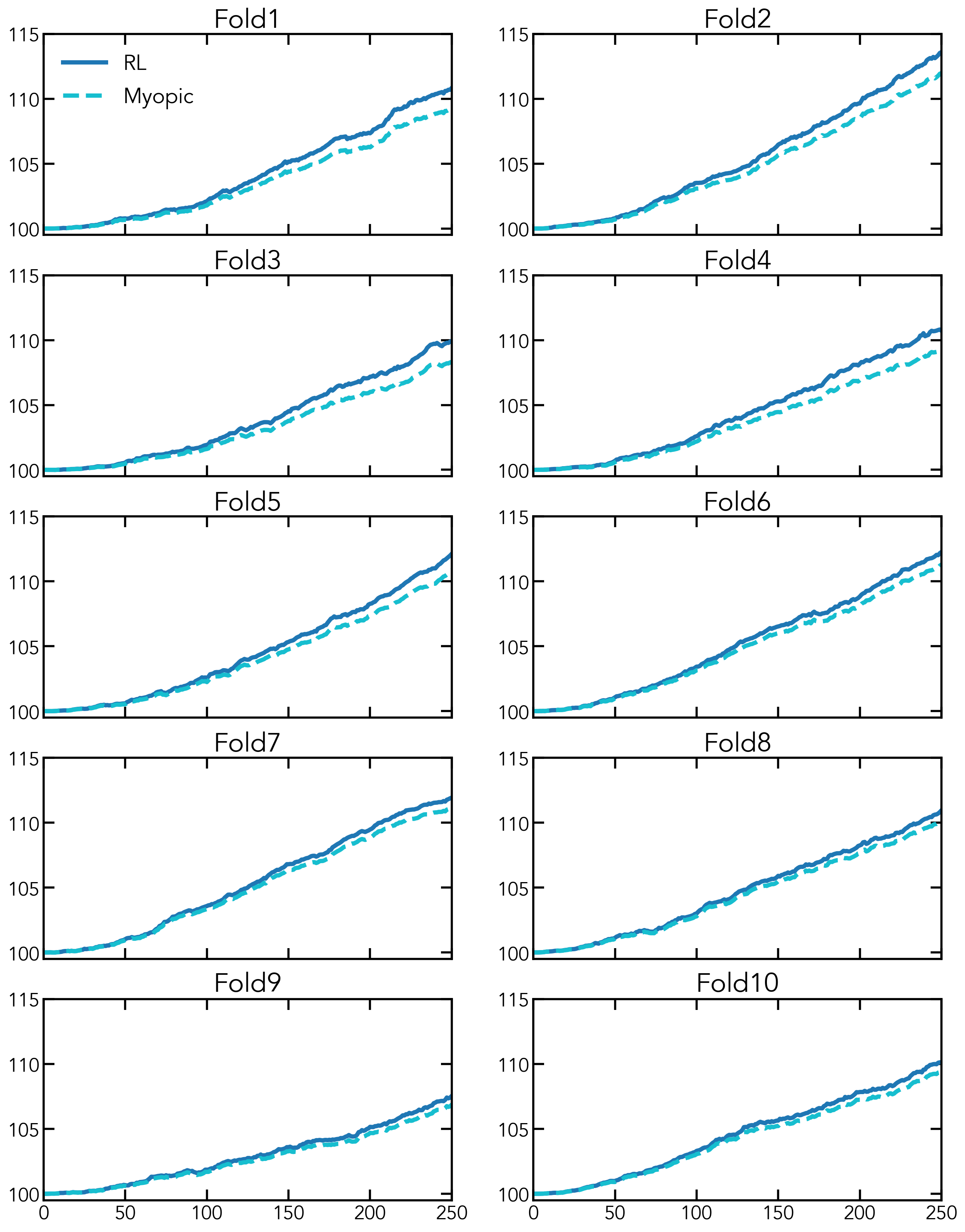}
        \caption{Average wealth (200 trajectories) of the RL strategy (solid) and the myopic strategy (dot). The X-axis shows the day. Y-axis shows the total wealth. }
        \label{plots}
\end{figure}

Table~\ref{table:ratio} shows the value added of the two methods in each fold on the synthetic data. 
The RL strategy of Algorithm \ref{alg2} consistently outperforms the myopic strategy given by \eqref{eq:myopic} by showing an annual increase of 1.8\% in total value. The results are scalable to any $W_0$. Therefore, our RL approach will show significant outperformance in the absolute value of total assets added when the given capital (i.e., $W_0$) is large.
Note in the table that we are emphasizing the \$-value of returns, as this is a better measure of fund performance as per the reasoning of \cite{berk2015measuring}.

\begin{table}
\caption{Final Total Asset Value Added on Synthetic Data with $W_0=\$100$.}
\begin{center}
\begin{tabular}{|c|cc|cc|c|}
\hline
& \multicolumn{2}{c}{$W_T$ (\$)} & \multicolumn{2}{|c|}{$W_T/W_0-1$ (\%)} & $W_T(RL)/$ \\
Fold & RL & myopic & RL  & myopic & $W_T(myopic)$ \\
\hline
1 & 107.57 & 105.45 & 7.57 & 5.45 & 1.020 \\
2 & 107.62 & 105.68 & 7.62 & 5.68 & 1.018 \\ 
3 & 104.46 & 103.10 & 4.46 & 3.10 &1.013 \\
4 & 105.68 & 103.88 & 5.68 & 3.88 & 1.017\\ 
5 & 106.29 & 104.61 & 6.29 & 4.61 & 1.016\\
6 & 111.87 & 109.18 & 11.87 & 9.18 &1.024 \\ 
7 & 109.55 & 107.55 & 9.55 & 7.55 & 1.018\\
8 & 110.43 & 108.17 & 10.43 & 8.17 & 1.020\\ 
9 & 106.52 & 104.86 & 6.52 & 4.86 & 1.015\\
10 & 109.13 & 107.16 & 9.13 & 7.16 & 1.018\\
\hline
Ave. & 107.91 & 105.96 & 7.91 & 5.96 & 1.018\\
\hline
\end{tabular}
\end{center} 
\label{table:ratio}
\end{table}

The amount of money invested in the stock market is calculated by
$$ I_t = I_{t-1} + a_{t-1}^\top \vec S_{t-1}\ ,$$ where $I_0 = 0$ because all money is in cash at the beginning. Note that since the ETF prices vary with time, it is likely that at the end of the day, the total money invested in the stock market may exceed $W_0$.
Fig.~\ref{investment} shows the evolution of $I_t$ for both RL and myopic strategies. The two strategies show different investment speeds. The RL is investing faster than the myopic strategy. Fig.~\ref{portfolio} shows the evolution of the RL and myopic portfolio allocations with time. In the beginning, the portfolio allocations are at $0$ for both RL and myopic. As time passes, the portfolios increase, but the slope becomes smaller. Eventually, the portfolios will be attracted to a stationary state. It is worth mentioning that the investment portfolio charts are not for comparison purposes. Instead, they are used to  illustrate how our RL approach and the myopic approach are applied and to ensure that the two approaches have reasonable behaviors.

\begin{figure}[ht!]
        \centering
        \includegraphics[scale=0.4]{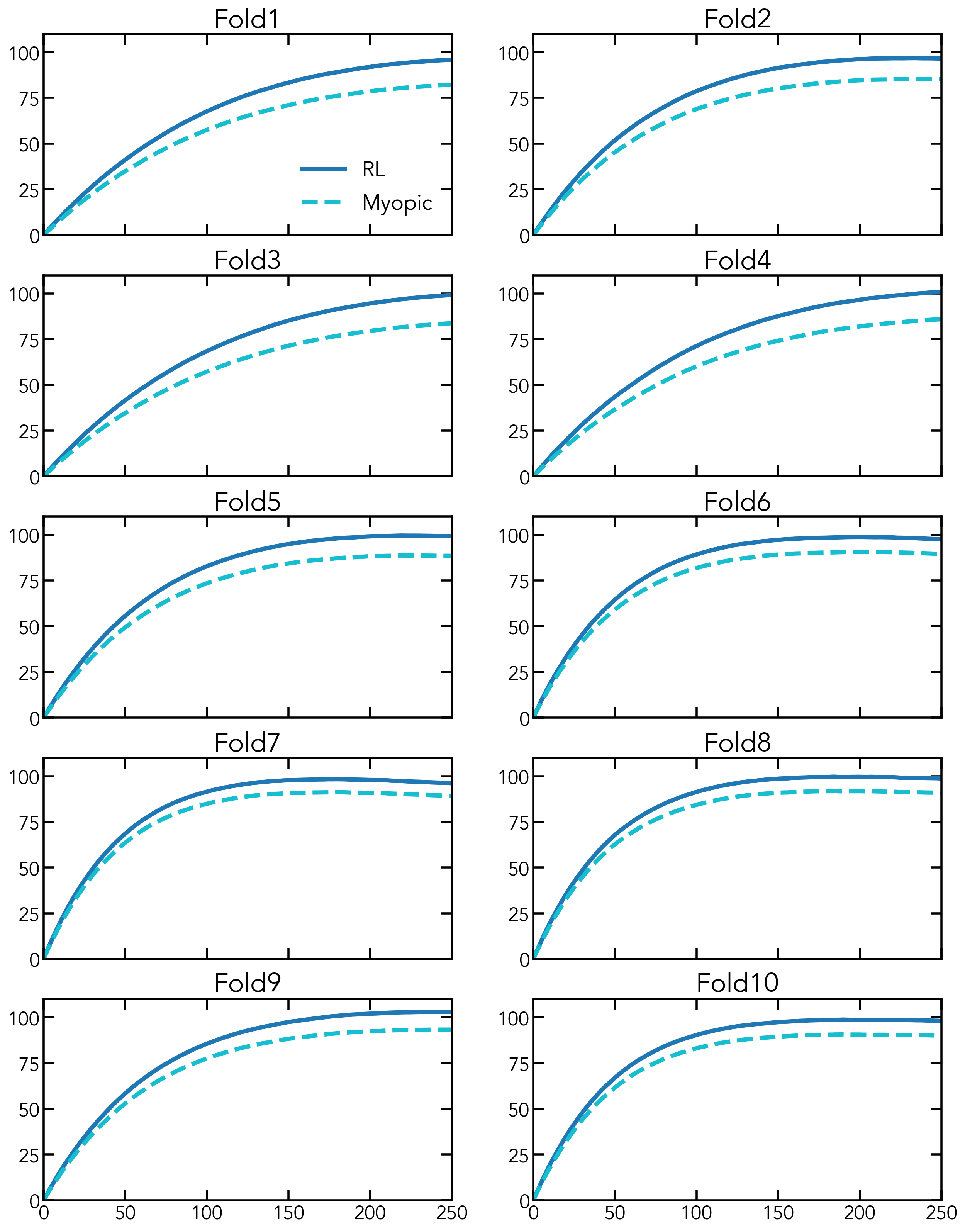}
        \caption{Money invested in the stock market at a different time for the RL strategy (solid) and the myopic strategy (dot). The X-axis shows the day. Y-axis shows the money invested in the stock market.}
        \label{investment}
\end{figure}

\begin{figure}[ht!]
        \centering
        \includegraphics[scale=0.4]{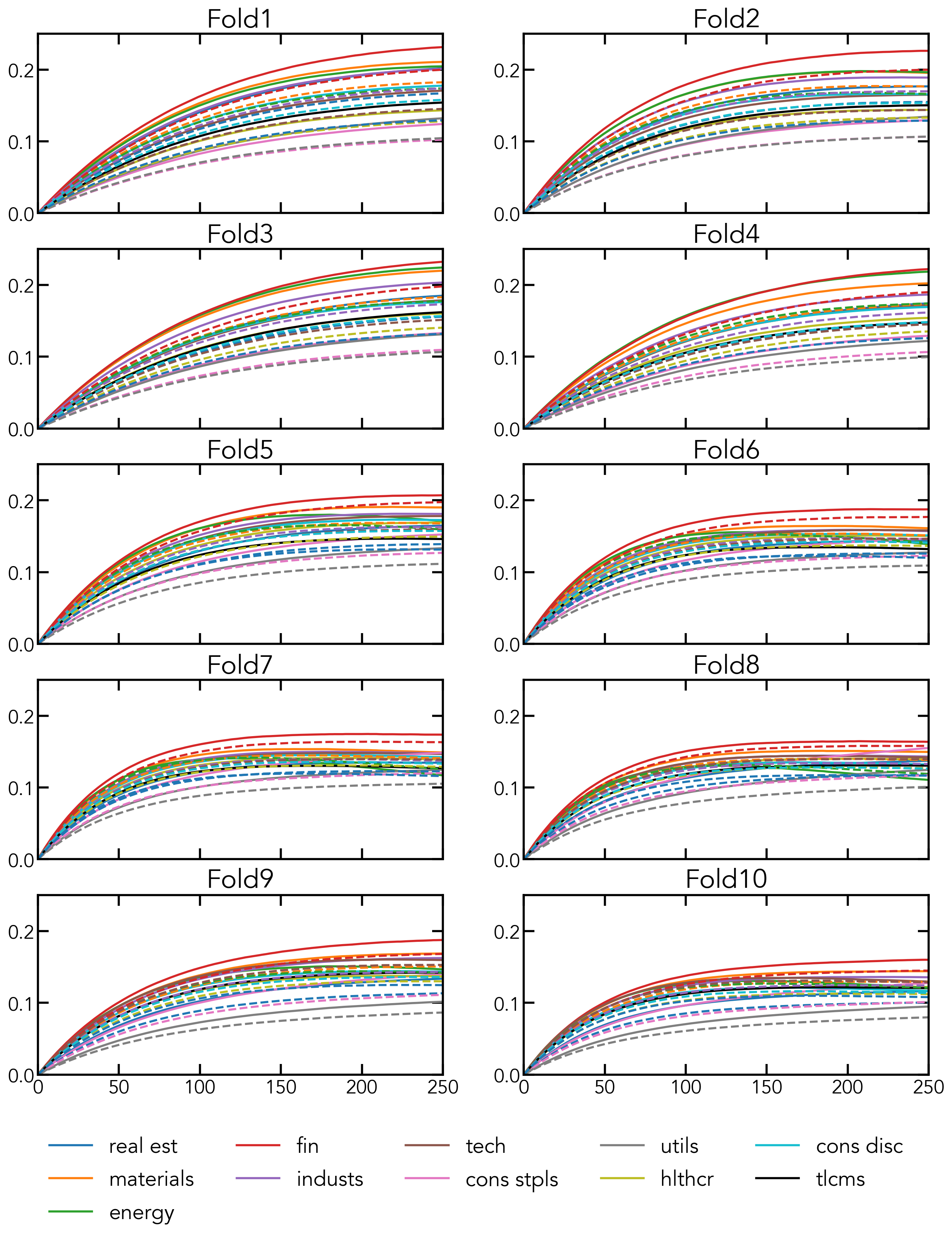}
        \caption{Portfolios of the RL strategy (solid) and the myopic strategy (dot) at different times. The X-axis shows the day. Y-axis shows the portfolio.}
        \label{portfolio}
\end{figure}

\subsection{Simulation on Historical Data}

For historical out-of-sample data, $Z_t$ was found by $$Z_t = R_t - \mu\,$$ where $R_t$ is the return rate of the real historical data at each time. For each fold, the NN was trained with the data of that fold and tested out-of-sample on the historical data of the following six months. The training epoch was 100. The optimizer was Adam. And the loss function was MSE. In this case, we took $\epsilon=0.01$, and again we were able empirically to verify Condition \ref{cond:damping}. The result is shown in Fig.~\ref{historicalwealth}. For most situations, the RL outperforms the myopic strategy, but not as significantly as it did on synthetic data. This is because the real historical market data contains considerable uncertainty that might cause the diminished performance of the RL strategy in testing. Table~\ref{table:historicalratio} shows annualized total asset value added by using the two approaches on historical market data. Among the cases, the RL strategy shows more added value  than the myopic approach. The RL method is able to consistently outperform the myopic by about 11bps. This over-performance is not tremendous but certainly makes evident that RL can improve out-of-sample performance for this problem. The results are scalable: if the fund manager has 100 million dollars, using the RL strategy can help her gain an extra \$120,000 compared to the gain realized from the myopic strategy.

\begin{table}
\caption{Annualized Total Asset Value Added on Historical Data with $W_0=\$100$.}
\begin{center}
\begin{tabular}{|c|cc|cc|c|}
\hline
& \multicolumn{2}{c}{$W_T$ (\$)} & \multicolumn{2}{|c|}{$W_T/W_0-1$ (\%)} & $W_T(RL)/$ \\
Fold & RL & myopic & RL  & myopic & $W_T(myopic)$ \\
\hline
1+2 & 99.58 & 99.57 & -0.4148 & -0.4292 & 1.0001 \\
3+4 & 101.26 & 101.08 & 1.2669 & 1.0877 & 1.0017 \\
5+6 & 103.12 & 103.01 & 3.1297 & 3.0163 & 1.0011 \\
7+8 & 97.24 & 97.35 & -2.7593 & -2.6473 & 0.9988 \\
9+10 & 107.09 & 106.71 & 7.0979 & 6.7143 & 1.0035 \\
\hline
Ave. & 101.66 & 101.54 & 1.6640 & 1.5483 & 1.0011\\
\hline
\end{tabular}
\end{center} 
\label{table:historicalratio}
\end{table}

Fig.~\ref{historicalinvest} shows how the money is invested with time.  Fig.~\ref{historicalportfolio} shows how the portfolios change. It bears repeating that the practical purpose of this paper’s optimization is to optimally move a large amount of new capital into the market, i.e., bring this new capital into the fund. By showing the cumulative amount of money invested we can see how much of the investment goal has been accomplished. Figures \ref{investment} and \ref{historicalinvest} illustrate how the two strategies behave as they are moving the new capital into stocks. Overall, the portfolios show similar behavior to what we observed in the synthetic data case, namely, that the RL strategy seeks to move the new capital into the ETFs faster than the myopic.

Note that $\epsilon$, in this case, is larger than $\epsilon$ in Sec.~\ref{Syn}, this is because we want to highlight the difference between RL and myopic strategies. If $\epsilon=0.003$,  the difference between the two strategies is less easy to highlight. The reason for the weaker significance when testing out-of-sample is because of the following. First, Fig.~\ref{plots} shows persistent over-performance by RL because it is an average of 200 trajectories. It is possible that if we had abundantly more out-of-sample data, we could see averages playing out, in which case we might see a stronger out-of-sample performance by RL. Second, there may be an out-of-sample model risk, i.e., that the model changes or is misspecified in the out-of-sample test. In this case, the training is aimed at learning an optimum that becomes sub-optimal when applied in the out-of-sample test. Ultimately, we cannot say much more than this, but the different folds of data that we have shown do give us some sense for out-of-sample variation for both RL and the myopic portfolios. 

\newcommand\x{6cm}
\newcommand\y{3.5cm}
\begin{figure*}[ht!]
        \centering
        \includegraphics[width = \x, height = \y]{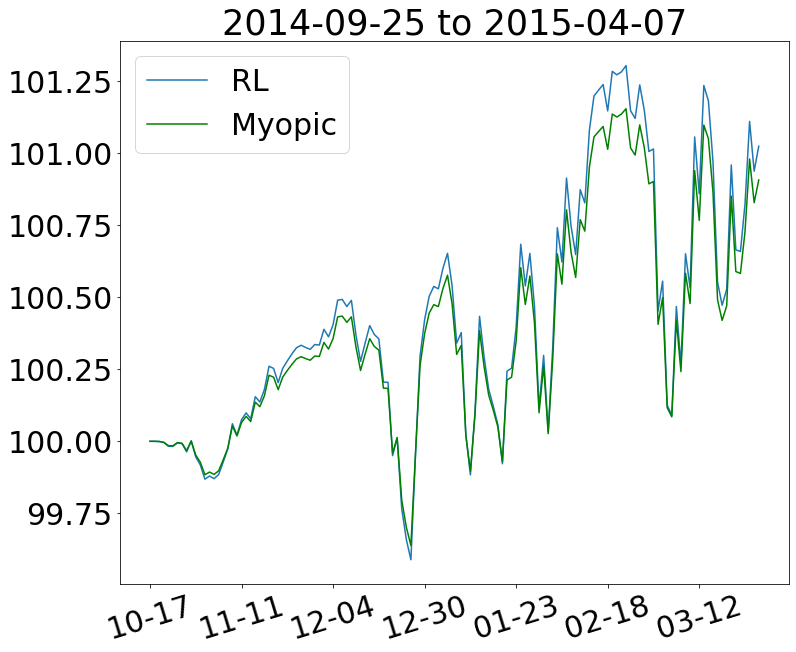}
        \includegraphics[width = \x, height = \y]{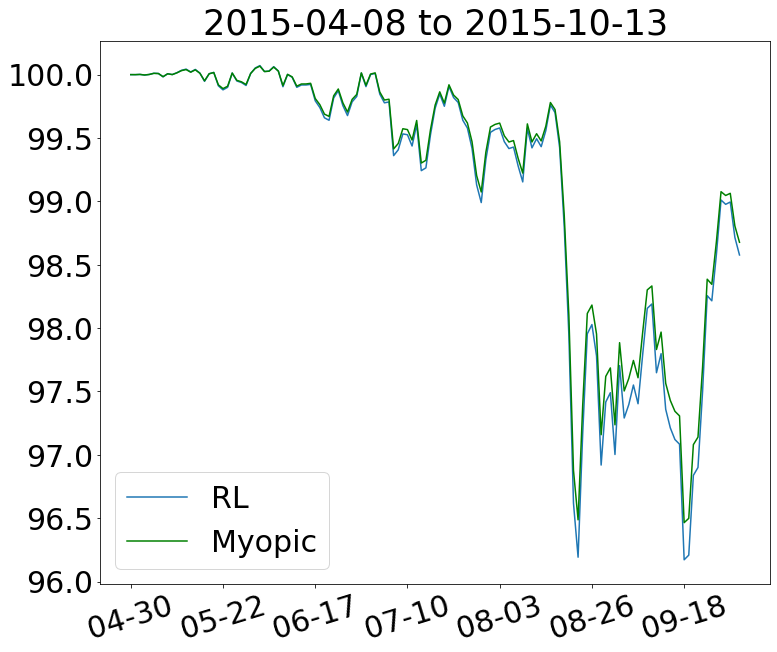}
        \includegraphics[width = \x, height = \y]{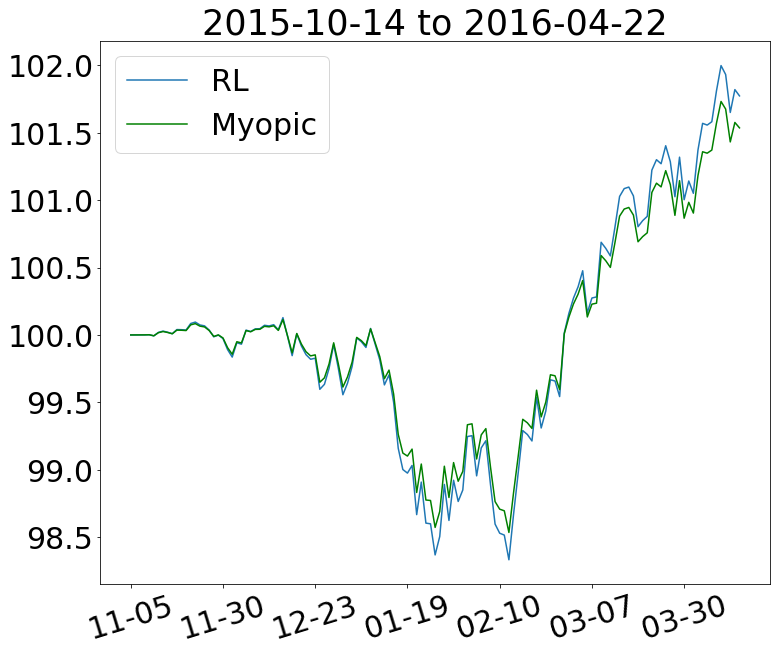}
        \includegraphics[width = \x, height = \y]{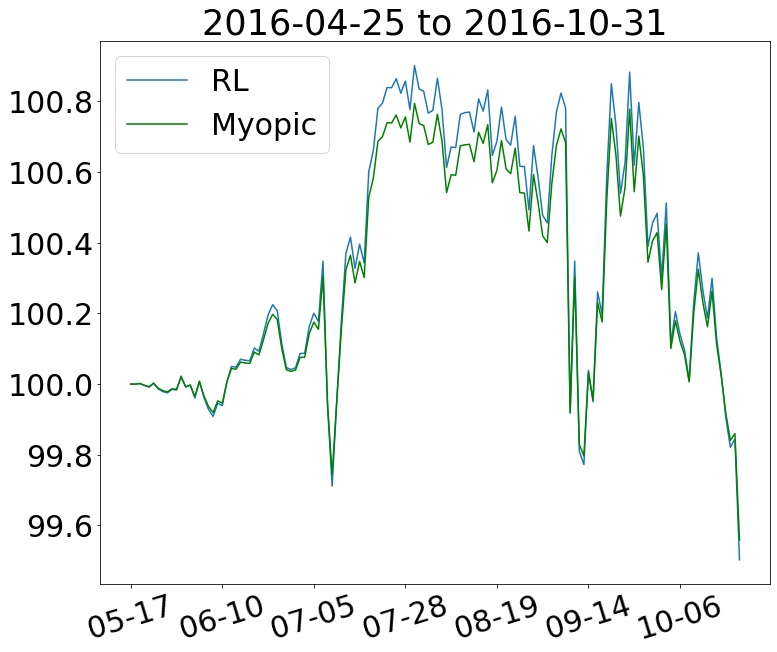}
        \includegraphics[width = \x, height = \y]{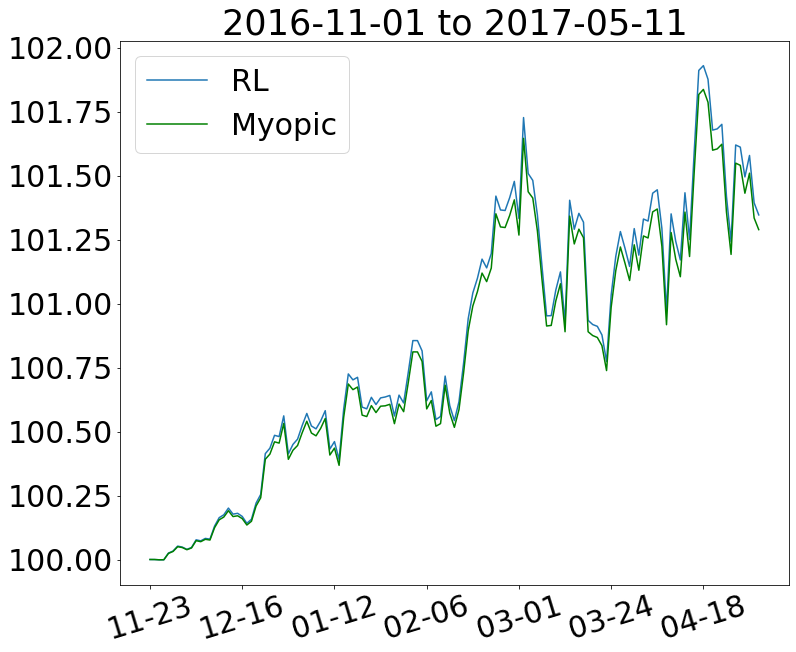}
        \includegraphics[width = \x, height = \y]{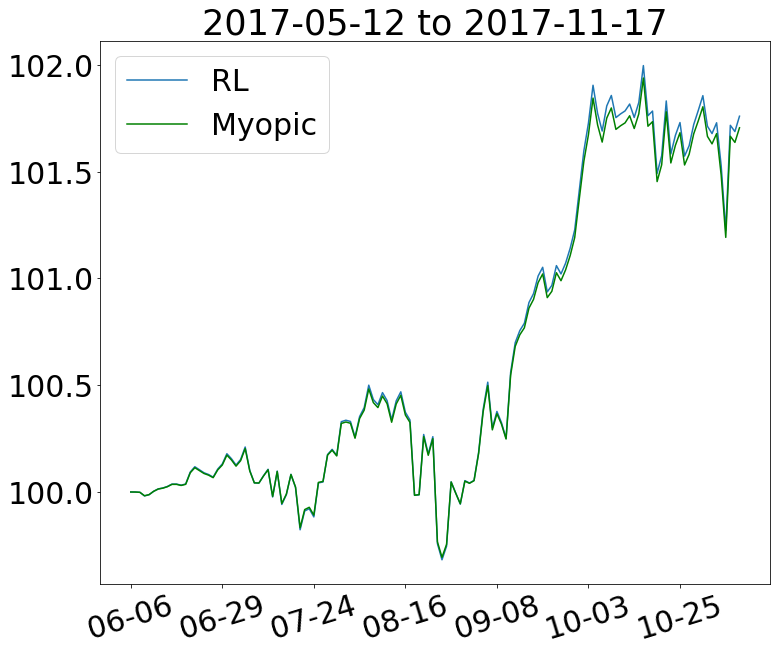}
        \includegraphics[width = \x, height = \y]{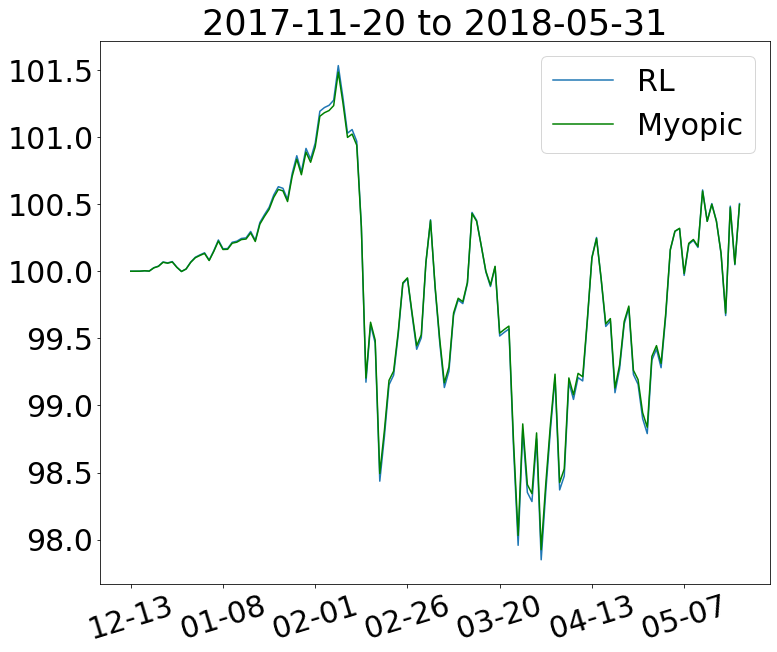}
        \includegraphics[width = \x, height = \y]{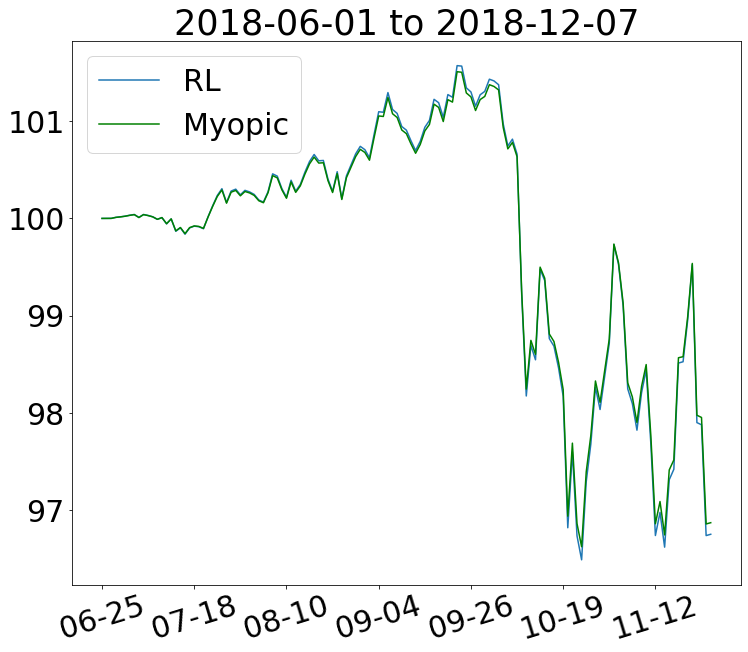}
        \includegraphics[width = \x, height = \y]{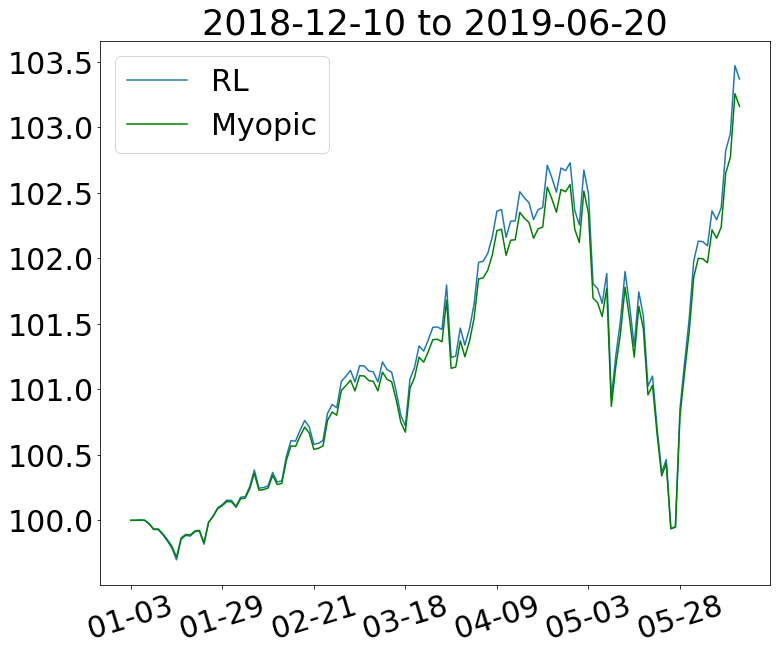}
        \includegraphics[width = \x, height = \y]{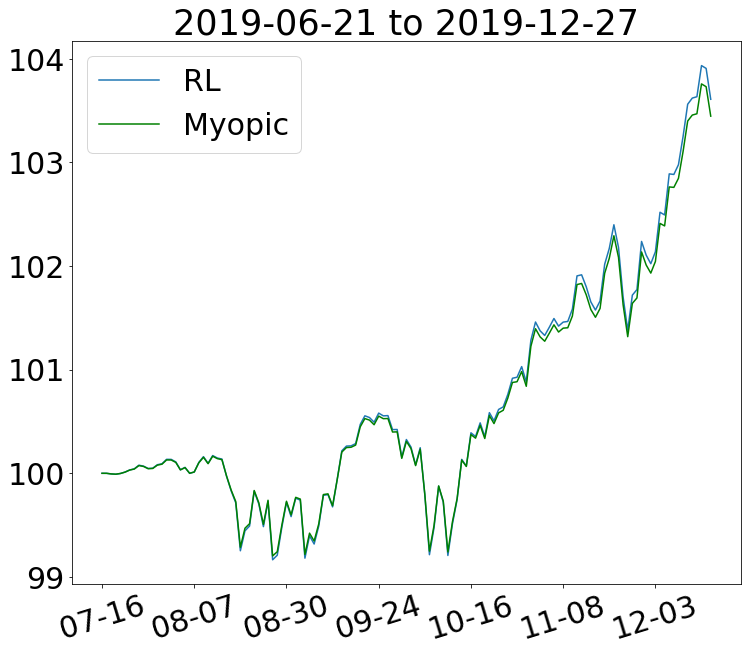}
        \caption{Total wealth of the RL strategy (solid) and the myopic strategy (dot) at different time.   X-axis shows the date (month - day).  Y-axis shows the wealth.}
        \label{historicalwealth}
\end{figure*}

\newcommand\xx{6cm}
\newcommand\yy{3.5cm}
\begin{figure*}[ht!]
        \centering
        \includegraphics[width = \xx, height = \yy]{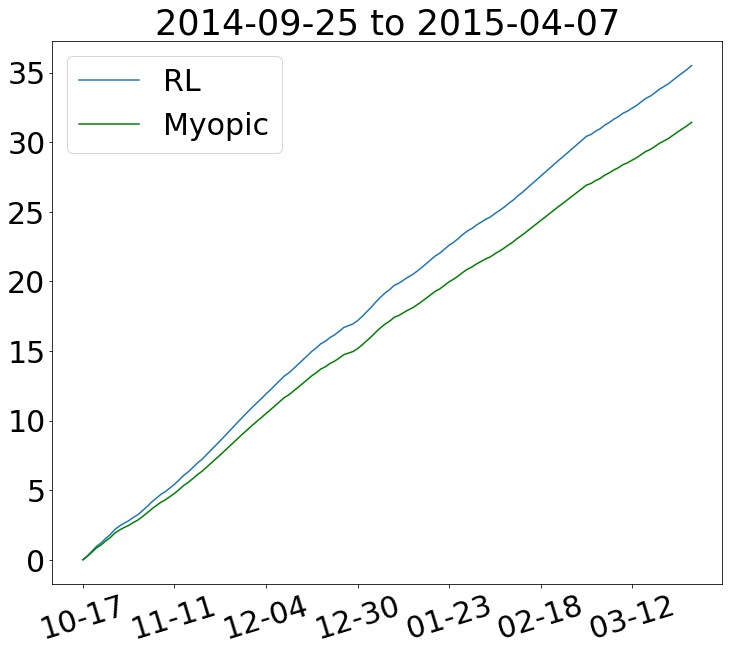}
        \includegraphics[width = \xx, height = \yy]{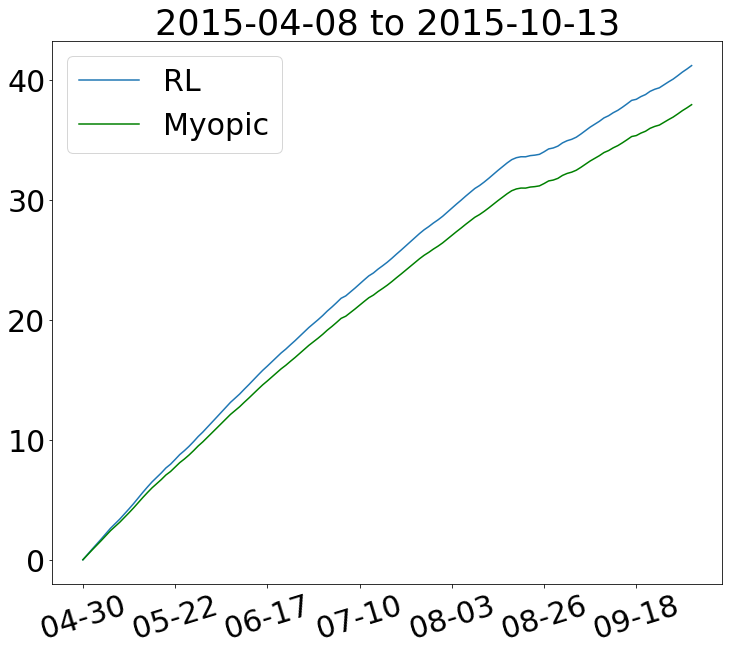}
        \includegraphics[width = \xx, height = \yy]{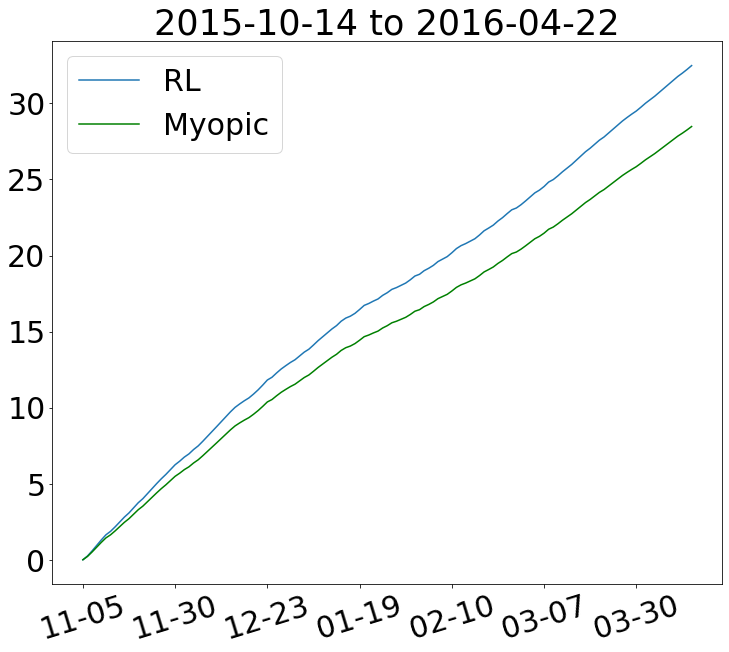}
        \includegraphics[width = \xx, height = \yy]{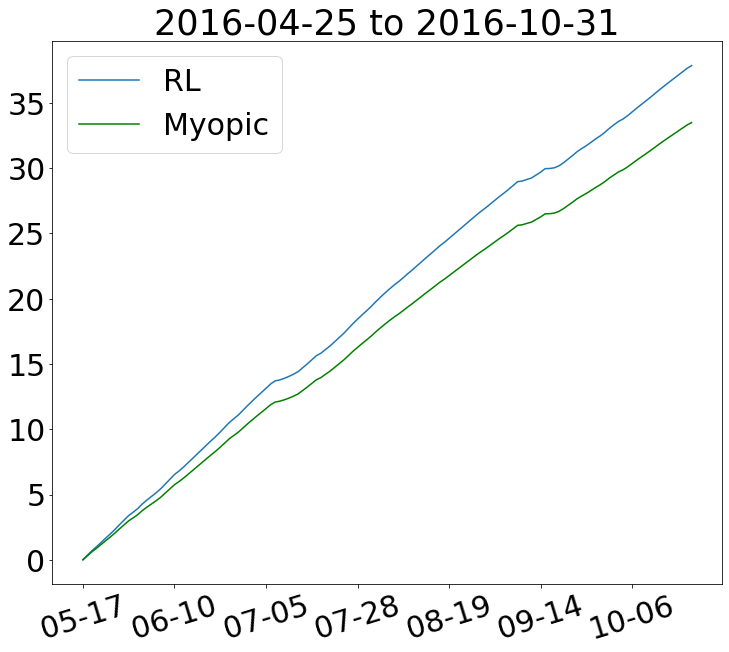}
        \includegraphics[width = \xx, height = \yy]{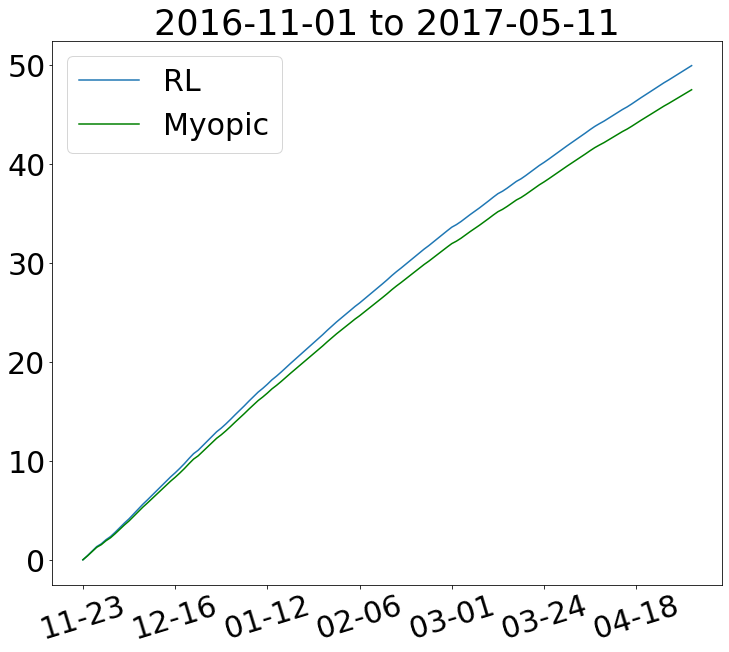}
        \includegraphics[width = \xx, height = \yy]{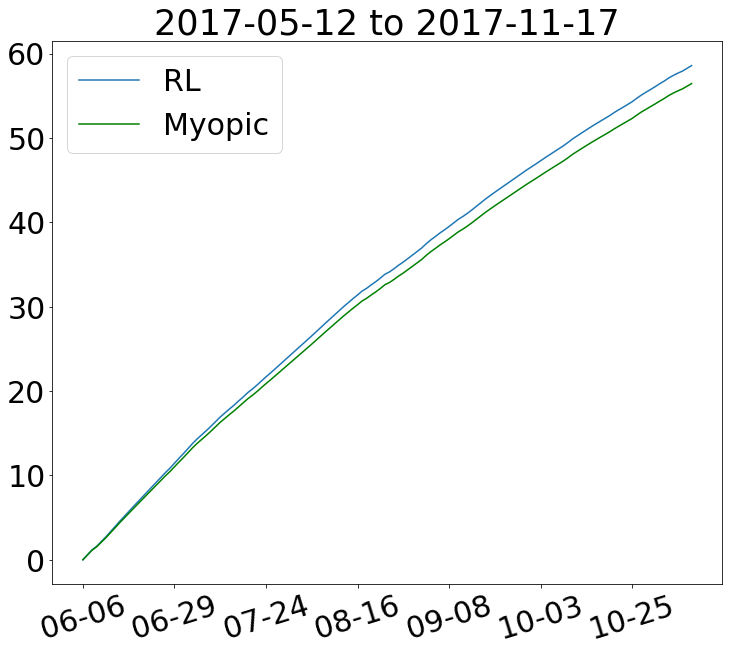}
        \includegraphics[width = \xx, height = \yy]{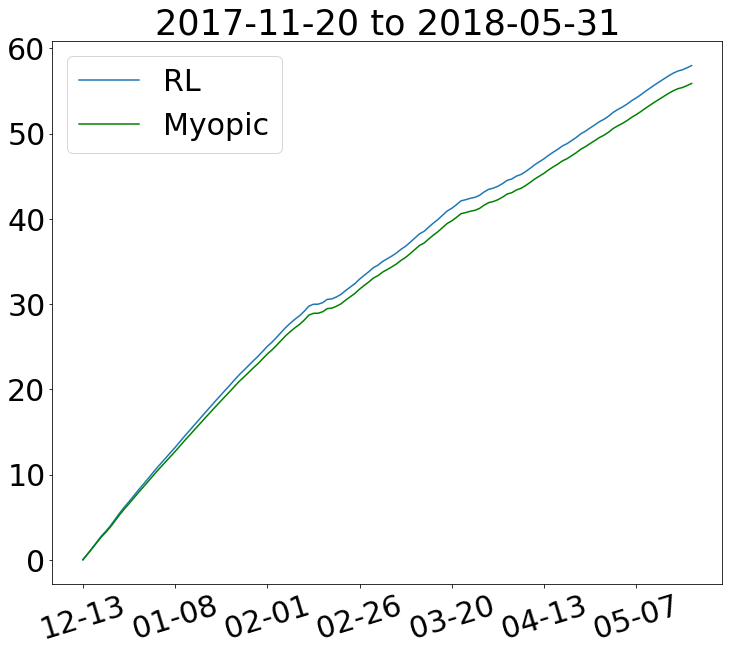}
        \includegraphics[width = \xx, height = \yy]{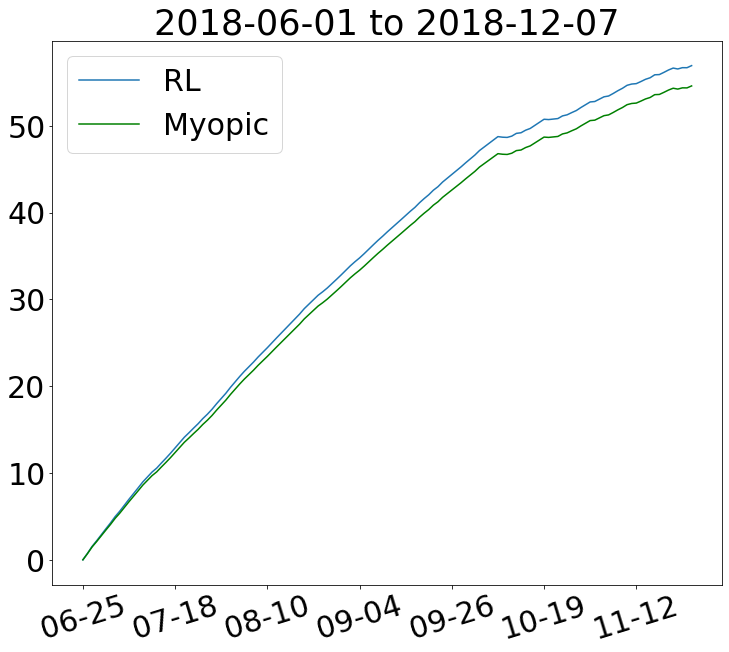}
        \includegraphics[width = \xx, height = \yy]{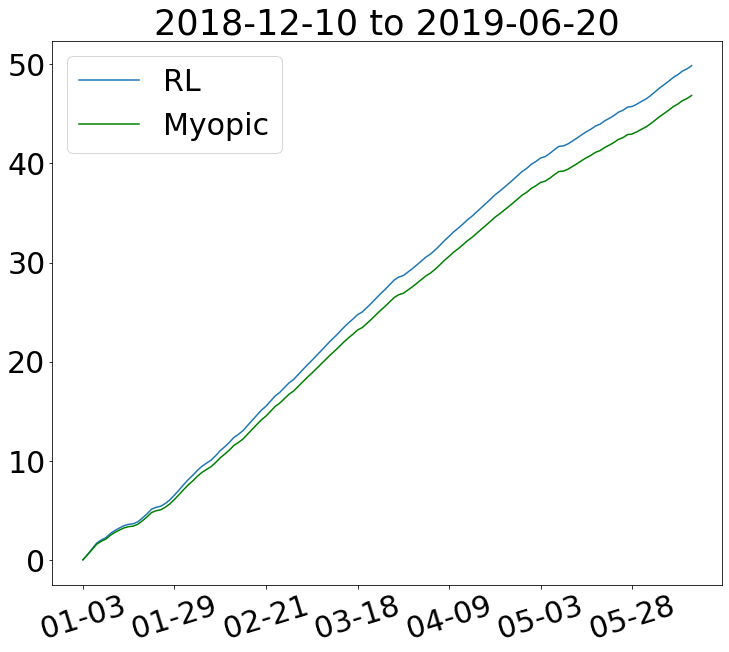}
        \includegraphics[width = \xx, height = \yy]{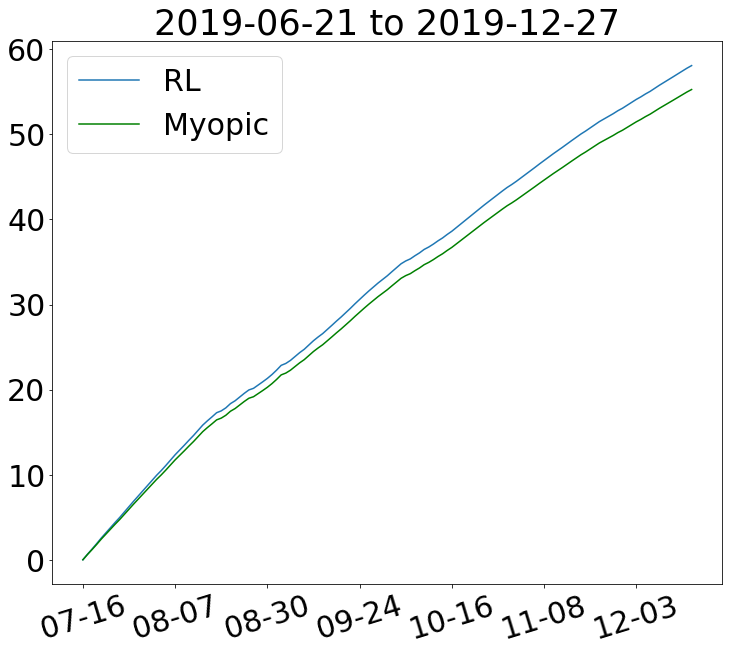}
        \caption{Money invested in the stock market at different times for the RL strategy (solid) and the myopic strategy (dot).   The X-axis shows the date (month - day).  Y-axis shows the money invested in the stock market.}
        \label{historicalinvest}
\end{figure*}

\begin{figure*}[ht!]
        \centering
        \includegraphics[width = \x, height = \y]{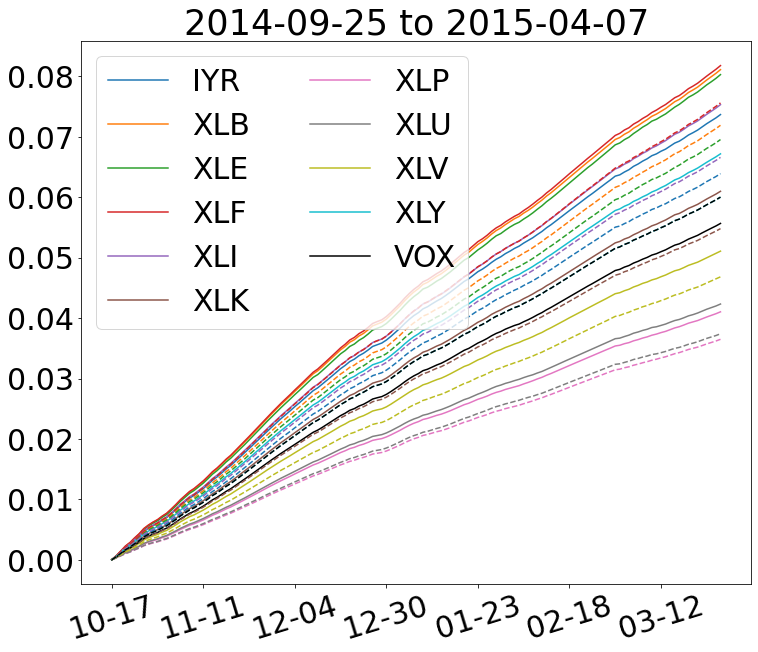}
        \includegraphics[width = \x, height = \y]{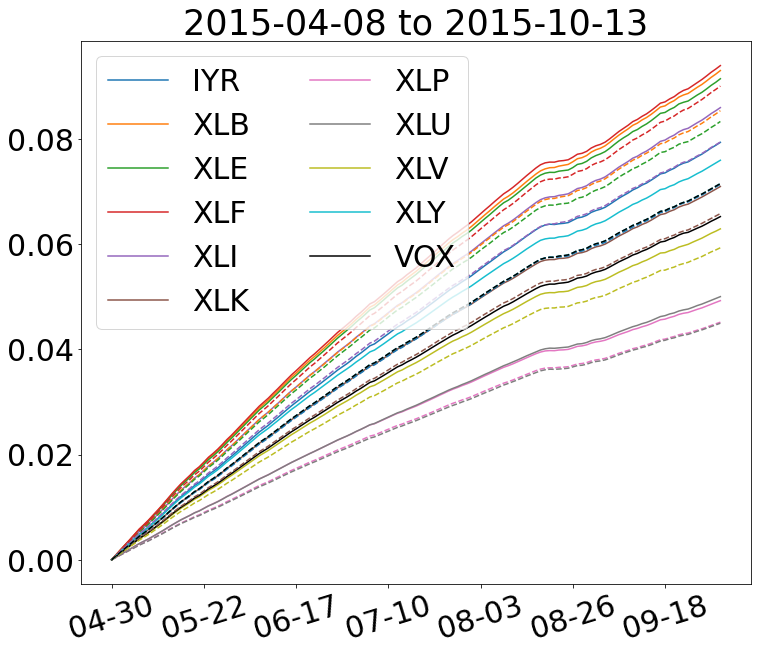}
        \includegraphics[width = \x, height = \y]{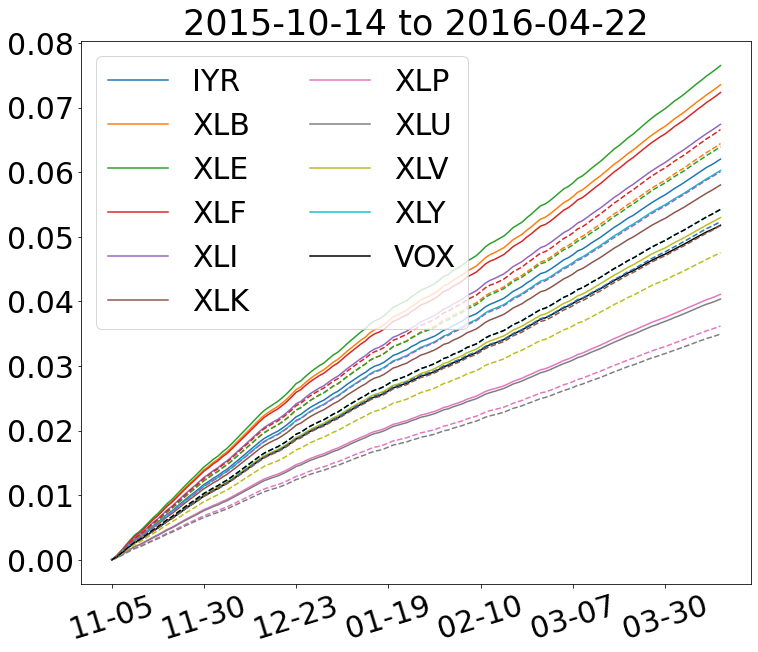}
        \includegraphics[width = \x, height = \y]{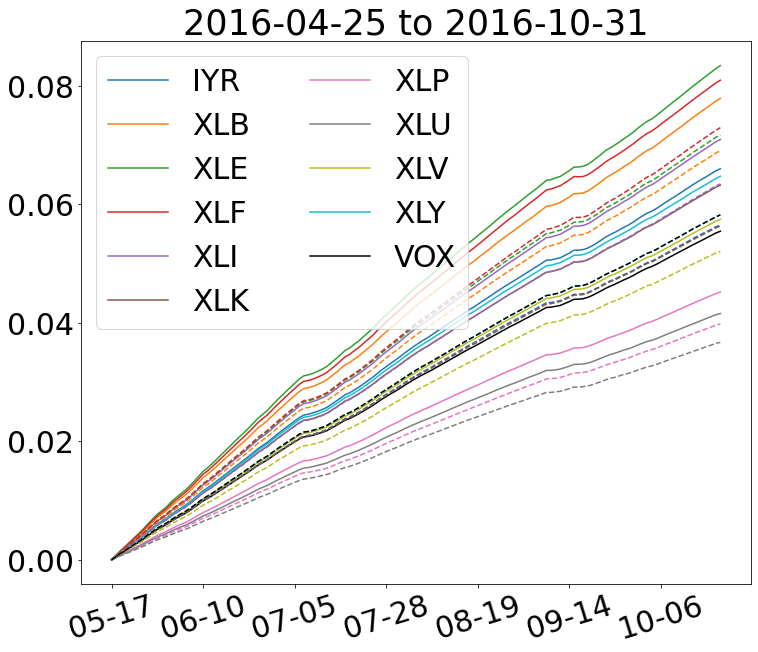}
        \includegraphics[width = \x, height = \y]{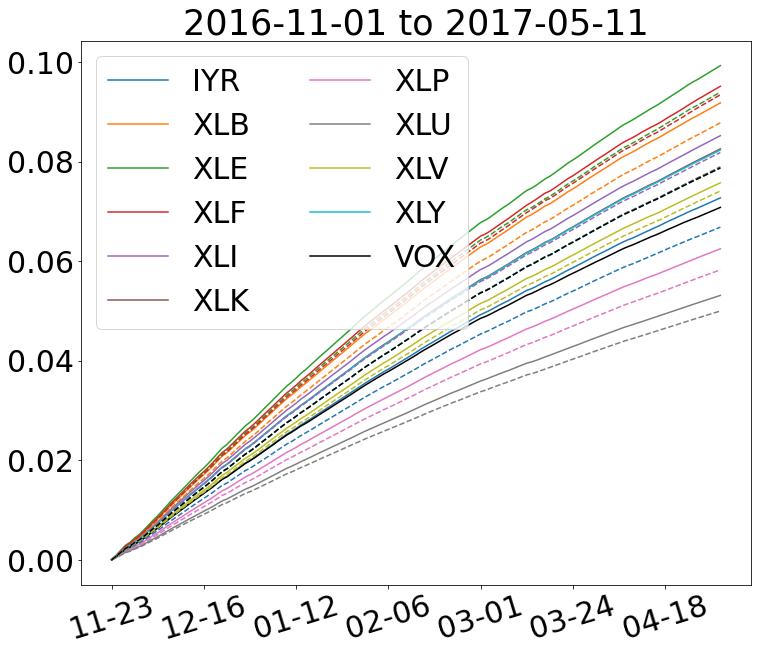}
        \includegraphics[width = \x, height = \y]{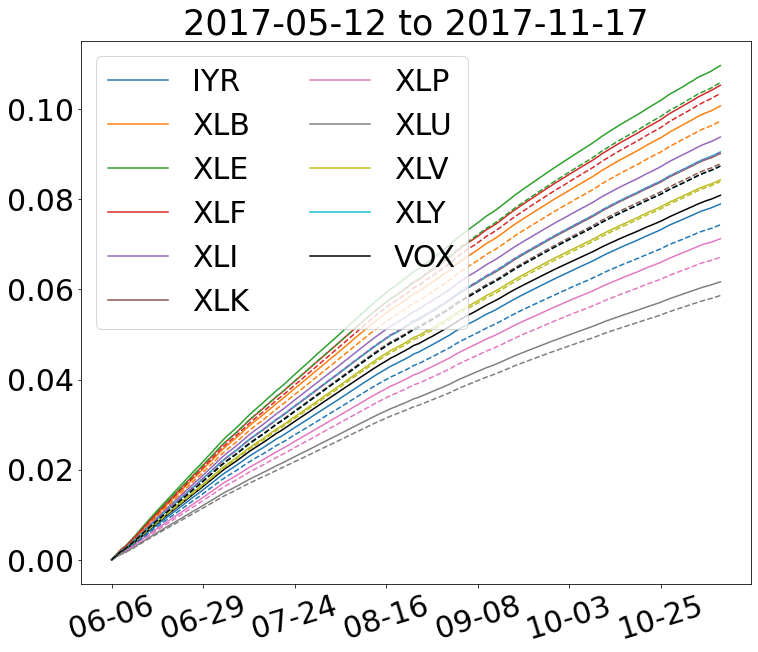}
        \includegraphics[width = \x, height = \y]{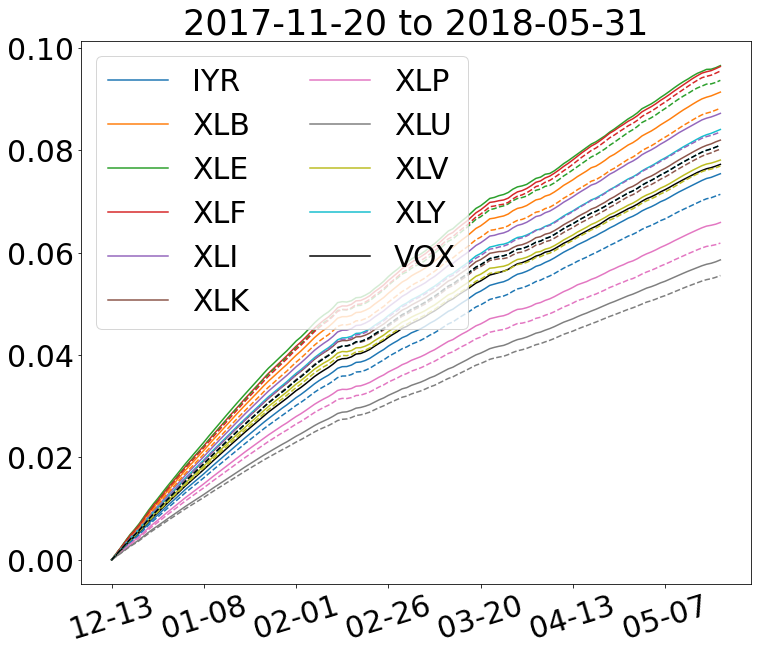}
        \includegraphics[width = \x, height = \y]{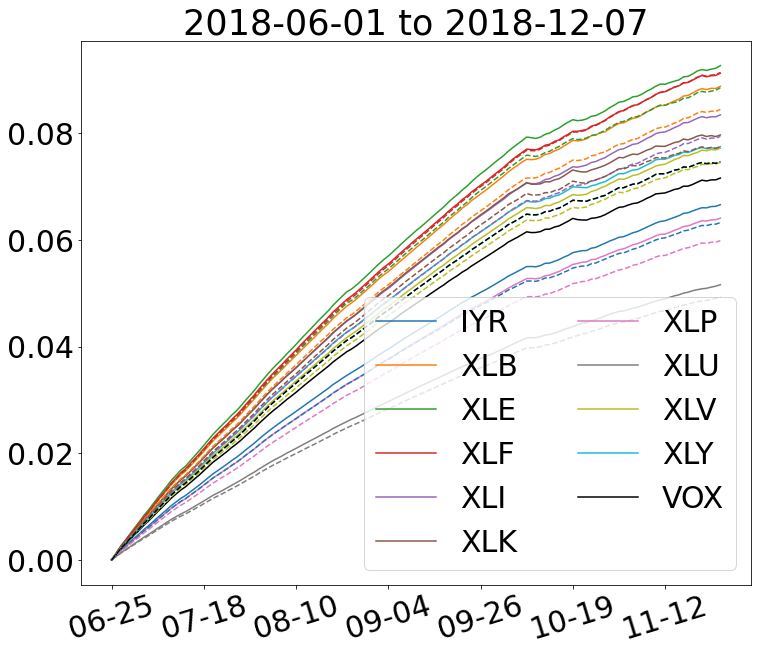}
        \includegraphics[width = \x, height = \y]{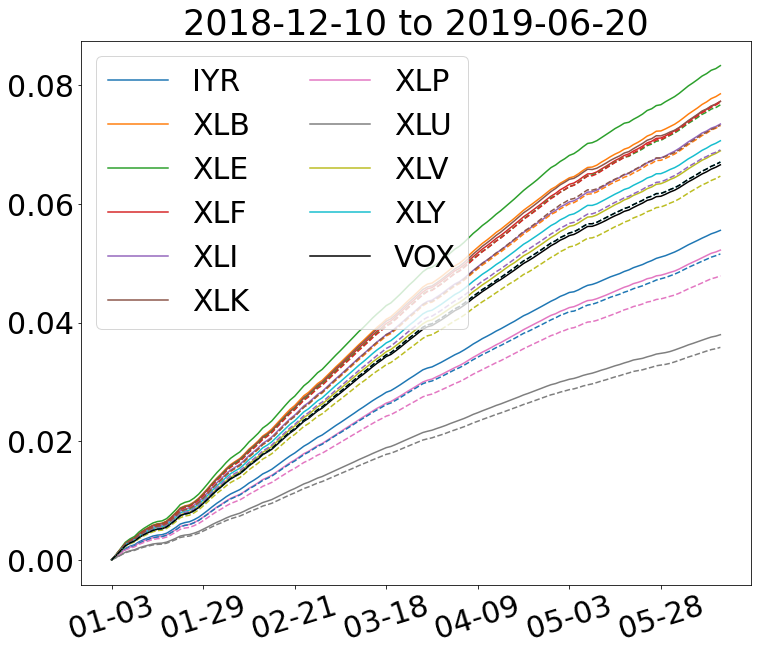}
        \includegraphics[width = \x, height = \y]{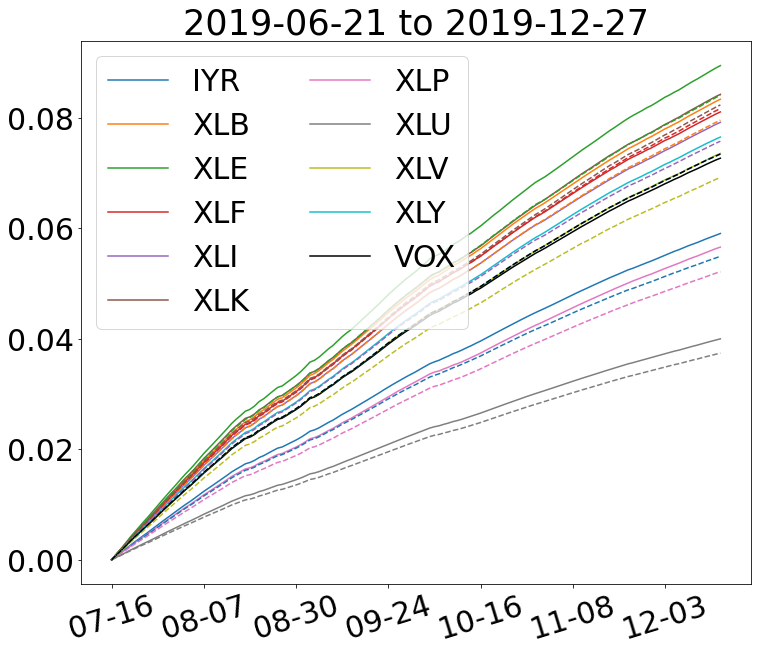}
        \caption{Portfolio value of the RL strategy (solid) and the myopic strategy (dot) with time. The X-axis shows the date (month - day).  Y-axis shows the portfolio.}
        \label{historicalportfolio}
\end{figure*}

\subsection{The Objective Value $V$ on Historical Data} The main goal of this paper is to maximize the objective value function \eqref{eq:valueFunction}, we, therefore, ran the RL and myopic algorithms on the 10 folds historical market data and calculated the relevant terms in \eqref{eq:valueFunction}. Specifically, we show the average results of $V$, total transaction cost $\sum_{t=0}^T \delta^t \frac{\epsilon q(\vec S_t,P_t)}{2}a_t^\top \Psi_t a_t$, and risk penalty $\sum_{t=0}^T \delta^t \frac{\gamma}{2}X_t^\top P_t X_t$ with different $\epsilon$.
We observed that when  $\epsilon$ is relatively small (e.g., $0.00001$), our RL algorithm considers an aggressive investment to maximize \eqref{eq:valueFunction} (i.e., increasing the inventory $X_t$ quicker and thus taking a relatively higher total transaction cost and risk penalty). When $\epsilon$ is relatively large (e.g., $0.01$), our RL algorithm adopts a conservative investment to maximize \eqref{eq:valueFunction} (i.e., increasing $X_t$ slower and thus taking a relatively lower total transaction cost and risk penalty). 
Fig.~\ref{fig:real_x} shows how the inventory of IYR ETF  changes for different $\epsilon$ by using the RL and myopic strategies. The inventories of other ETFs show similar behavior.  From the figure, when $\epsilon=0.01$, our RL tends to slowly increase $X_t$. When $\epsilon=0.00001$, our RL tends to increase $X_t$ quickly. The explanation for such behavior is that when $\epsilon$ is small (meaning that the transaction cost can be negligible), $f(a_t,X_t,\vec S_t,P_t)$ in  \eqref{eq:valueFunction} can be approximated as:
\begin{align}
\label{eq:approx-valueFunction}
f(a_t,X_t,\vec S_t,P_t)&\approx \underbrace{\mu^\top \Psi_t X_t}_{\hbox{investment return}} -\underbrace{\frac{\gamma}{2}X_t^\top P_t X_t}_{\hbox{risk penalty}}
\end{align} which has an equilibrium point at $X_t = \frac{1}{\gamma} P_t^{-1} \mu^\top \Psi_t $ for every $t$. Therefore, to maximize \eqref{eq:valueFunction} with the approximation \eqref{eq:approx-valueFunction}, our RL algorithm needs to increase $X_t$ to the equilibrium point within a short time and thus performs an aggressive investment. If $\epsilon$ is large (meaning that the transaction cost cannot be negligible),  to avoid high transaction costs, our RL can only slowly increase $X_t$. We show the average results of final objective-function value $V$, total transaction costs, and total risk penalty over the 10 folds historical market data in Table~\ref{tab:small_eps} for different $\epsilon$. The total transaction cost and risk penalty for $\epsilon=0.00001$ are higher than the total transaction cost and risk penalty for $\epsilon=0.01$, which confirms our explanation. 

\begin{figure}
    \centering
    \includegraphics[scale=0.5]{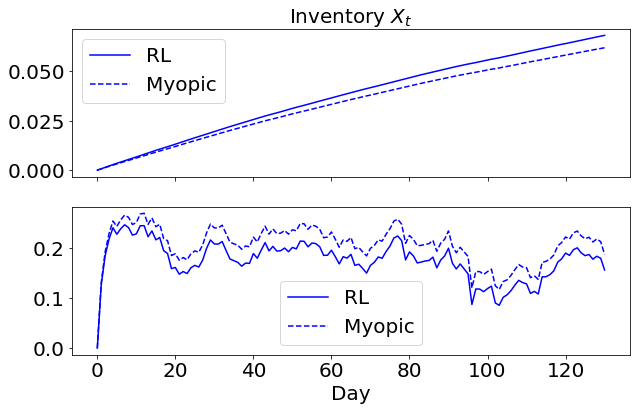}
    \caption{Inventory of IYR ETF. Top: $\epsilon=0.01$. Bottom: $\epsilon=0.00001$.}
    \label{fig:real_x}
\end{figure}

Our RL algorithm outperforms the myopic strategy for both aggressive and conservative investment cases by showing a higher  $V$ according to Table~\ref{tab:small_eps}. For the aggressive investment case ($\epsilon=0.00001$), our RL algorithm behaves less aggressively than the myopic strategy by having a lower total transaction cost and risk penalty. On the other hand, for the conservative investment case ($\epsilon=0.01$), our RL algorithm behaves more aggressively than the myopic strategy by taking a higher total transaction cost and risk penalty. However, for both cases, our RL algorithm  shows a higher value for the objective $V$ than the myopic strategy, as shown in Table~\ref{tab:small_eps}. 
\begin{table}
  \caption{Total Value $V$, Transaction Costs, and Risk Penalties for Different $\epsilon$.}
    \centering
    \begin{tabular}{|c|cccc|}
    \hline
   &  \multicolumn{4}{c|}{$V$}   \\
   $\epsilon$ &  RL & Myopic  & Difference & Difference / RL   \\
   \hline
   $0.01$ & 0.0234&	0.0225&	0.0009&	3.85\%	 \\
   $0.00001$ & 0.0885&	0.0782&	0.0103&	11.64\% 	\\
   \hline
 &  \multicolumn{4}{c|}{Total Transaction Cost} \\
   &  RL & Myopic  & Difference & Difference / RL \\
    \hline
  $0.01$   & 0.0013&	0.0011&	0.0002&	15.38\% \\
  $0.00001$ &   0.2562&	0.2600&	-0.0038&	-1.50\% \\
  \hline
   &  \multicolumn{4}{c|}{Total Risk Penalty} \\
  &   RL & Myopic  & Difference & Difference / RL \\
     \hline
 $0.01$  & 0.0012&	0.0011&	0.0001&	8.33\%  \\
   $0.00001$ &  0.2399&	0.2436&	-0.0037&	-1.54\% \\
   \hline
    \end{tabular}
    \label{tab:small_eps}
\end{table}

Fig.~\ref{fig:real} shows the average performance of our RL and the myopic strategies with time on the 10-fold historical market datasets.  The average wealth returns using  RL and myopic strategies are close. However, our RL approach attains a higher (i.e., better) value for the objective than the myopic strategy, meaning that our RL achieves a better trade-off between investment return, transaction cost, and risk penalty. Therefore, our RL outperforms the myopic strategy. 

\begin{figure}
    \centering
    \includegraphics[scale=0.35]{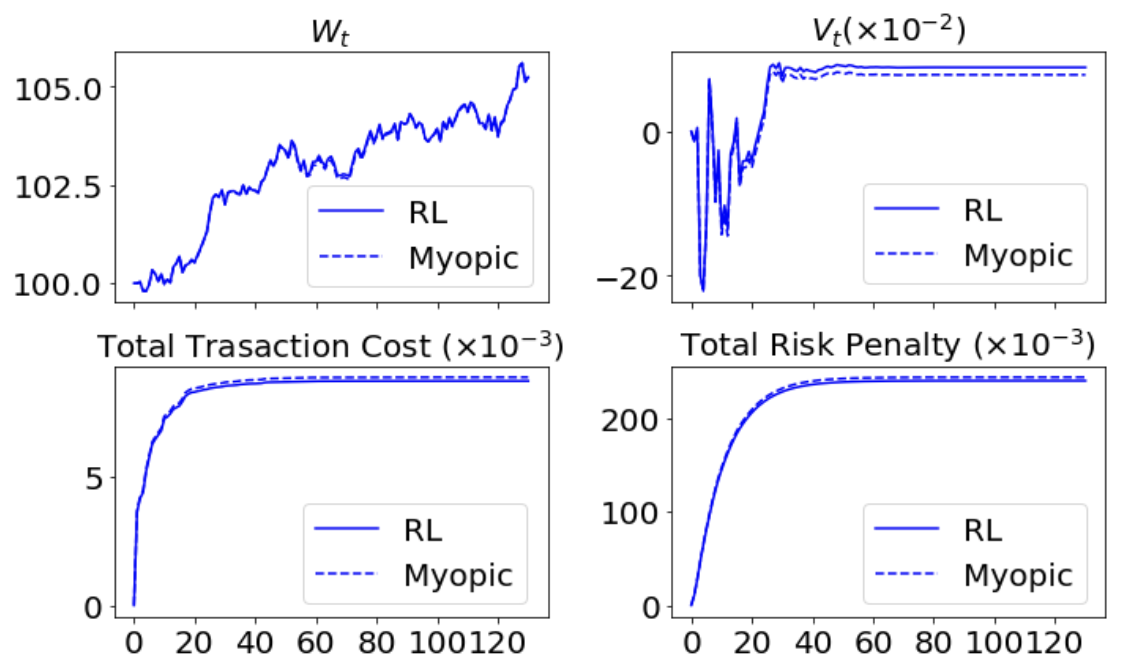}
    \includegraphics[scale=0.35]{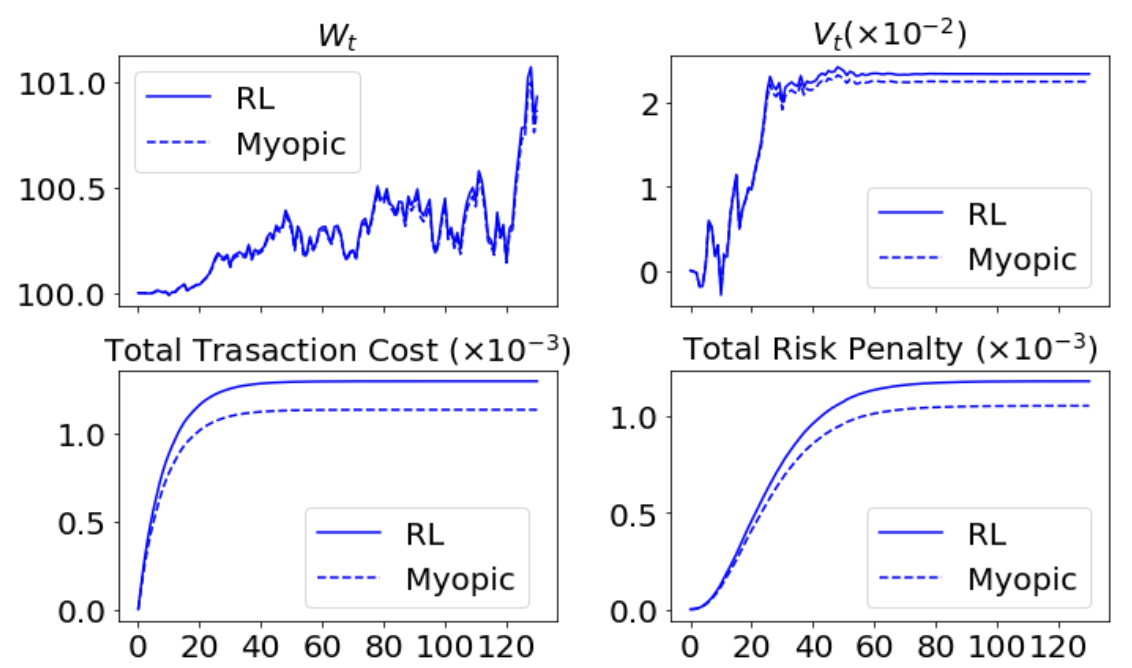}
    \caption{Results on real data. Top: $\epsilon=0.00001$. Bottom: $\epsilon=0.01$.}
    \label{fig:real}
\end{figure}

\subsection{Discussion of Our RL Approach}
From the experimental results, we have observed that our RL approach has a higher objective value V given in \eqref{eq:valueFunction} (i.e., the main goal of this paper) than the myopic strategy on the historical data. In other words, our RL approach balances the wealth return, risk penalty, and transaction cost better than the myopic by showing a 3.85\% better for $\epsilon=0.01$ and 11.64\% better for $\epsilon=0.00001$. On synthetic data, the RL method returns an extra 1-2\% additional annual percentage return over the myopic (out of a total of roughly 10\% annual return). This demonstrates that our method outperforms the myopic, as this extra 1-2\% annually is significant when measuring the growth of assets. For real data, we are able to show consistent over-performance of the RL method by about 11bps. This over-performance certainly makes it evident  RL's  improved out-of-sample performance. Our RL approach also shows a faster investment speed with working in real time. Therefore, our approach is a valid effective method.

\section{Conclusion}
We have implemented a reinforcement learning algorithm to solve the mean-variance preferences of \eqref{eq:valueFunction} with an MGARCH model and transaction costs of order $\epsilon$, where $\epsilon>0$ is a small parameter. Our method addresses issues of algorithm convergence and computation time by using an $\epsilon$ expansion to guide the network to the solution. While similar methods use double (dueling) networks to stabilize convergence, we only need to use one network with our expansion. Our method is stable and can work in real time. The resulting portfolios show good performance in simulated tests.  

\appendix
\section{Proofs}
{\bf Proof of Proposition \ref{prop:deltaEps}}:
\label{app:DeltaProof}
    We apply the Sherman-Woodbury-Morrison formula,
    \begin{align}
        \label{eq:SWM}
        \left(I+\widetilde P_t\Psi_t^{-1}\right)^{-1}&= \Psi_t\widetilde P_t^{-1} -\Psi_t\widetilde P_t^{-1}\left(I+\Psi_t\widetilde P_t^{-1}\right)^{-1}\Psi_t\widetilde P_t^{-1}\ .
    \end{align}
    The first term in \eqref{eq:SWM} is bounded as follows,
    \begin{align*}
        \|\Psi_t\widetilde P_t^{-1}\| &= \frac{\epsilon }{\gamma} q(\vec S_t,P_t)\|P_t^{-1}\Psi_t\|  \leq \frac{\epsilon }{\gamma}\goodchi \|h\|_\infty\|P_t^{-1}\|\leq\frac{\epsilon \goodchi\|h\|_\infty}{\gamma\underbar s^2c}\ ,
    \end{align*}
    where we have bounded $\|P_t^{-1}\|$ in the following way,
        \[\|P_t^{-1}\|\leq \|\Psi_t^{-1}\|^2\cdot \|\Sigma_t^{-1}\|\leq \frac{1}{\underbar s^2\inf_{\|v\|=1}v^\top CC^\top v}=\frac{1}{\underbar s^2c}\ ,\]
    where $c>0$ is the lower bound defined in Condition \ref{cond:C}. The second term in \eqref{eq:SWM} is bounded as follows,
        \begin{align*}
            &\left\|\Psi_t\widetilde P_t^{-1}\left(I+\Psi_t\widetilde P_t^{-1}\right)^{-1}\Psi_t\widetilde P_t^{-1}\right\| \leq \left\|\Psi_t\widetilde P_t^{-1}\right\|^2\left\|\left(I+\Psi_t\widetilde P_t^{-1}\right)^{-1}\right\| \leq \left(\frac{\epsilon \goodchi\|h\|_\infty}{\gamma\underbar s^2c}\right)^2\left\|\left(I+\Psi_t\widetilde P_t^{-1}\right)^{-1}\right\|\ .
        \end{align*}
    Now, taking the norm of \eqref{eq:SWM}, applying a triangle inequality to the right-hand side, and inserting these bounds, we have
    \begin{align*}
        \left\|\left(I+\Psi_t\widetilde P_t^{-1}\right)^{-1}\right\| &\leq \frac{\epsilon \goodchi\|h\|_\infty}{\gamma\underbar s^2c}  +\left(\frac{\epsilon \goodchi\|h\|_\infty}{\gamma\underbar s^2c}\right)^2\left\|\left(I+\Psi_t\widetilde P_t^{-1}\right)^{-1}\right\|\ .
    \end{align*}
    If $\epsilon \goodchi\|h\|_\infty<\gamma\underbar s^2c$, then we can re-arrange to obtain the bound in \eqref{eq:smallEpsBound}. 
    
{\bf Proof of Theorem \ref{thm:convThm}}:
\label{app:convergenceProof}
The following lemma is useful to have when proving convergence in Theorem \ref{thm:convThm}.
\begin{lemma}
    \label{lemma:AB}
    Let $A$ and $B$ be two symmetric positive definite matrices. Then
    $\|(A+B)^{-1}\|\leq \min(\|A^{-1}\|,\|B^{-1}\|)$.
\end{lemma}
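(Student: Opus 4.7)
\textbf{Proof proposal for Lemma \ref{lemma:AB}.} The plan is to reduce the operator-norm inequality to a statement about smallest eigenvalues via the Rayleigh quotient, and then use positive semidefiniteness of $A$ and $B$ to conclude. Since $A$ and $B$ are symmetric positive definite, so is $A+B$, and for any such matrix $M$ the spectral norm of its inverse equals the reciprocal of its smallest eigenvalue, i.e., $\|M^{-1}\| = 1/\lambda_{\min}(M)$. So it suffices to prove $\lambda_{\min}(A+B) \geq \max\bigl(\lambda_{\min}(A), \lambda_{\min}(B)\bigr)$.

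For the eigenvalue bound, I would invoke the variational characterization $\lambda_{\min}(M) = \min_{\|v\|=1} v^\top M v$. For any unit vector $v$,
\[
v^\top(A+B)v \;=\; v^\top A v + v^\top B v \;\geq\; v^\top A v \;\geq\; \lambda_{\min}(A),
\]
where the first inequality uses $v^\top B v \geq 0$ (positive semidefiniteness of $B$ suffices here). The same argument with the roles of $A$ and $B$ swapped gives $v^\top(A+B)v \geq \lambda_{\min}(B)$. Taking the minimum over unit $v$ on the left-hand side yields $\lambda_{\min}(A+B) \geq \max(\lambda_{\min}(A), \lambda_{\min}(B))$.

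Finally, inverting gives
\[
\|(A+B)^{-1}\| = \frac{1}{\lambda_{\min}(A+B)} \leq \frac{1}{\max(\lambda_{\min}(A), \lambda_{\min}(B))} = \min\bigl(\|A^{-1}\|, \|B^{-1}\|\bigr),
\]
which is the claim. There is no real obstacle here; the only subtlety is keeping straight that strict positive definiteness of each of $A$ and $B$ (needed so that $A^{-1}$ and $B^{-1}$ exist and the right-hand side is finite) is used, while only positive semidefiniteness of the ``other'' matrix is needed in each half of the Rayleigh-quotient comparison.
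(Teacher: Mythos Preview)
Your proof is correct and is essentially the same argument as the paper's: both use the Rayleigh-quotient identity $\|M^{-1}\|=1/\inf_{\|v\|=1}v^\top M v$ for symmetric positive definite $M$ and then exploit positive definiteness of $A$ and $B$ to bound $\inf_{\|v\|=1}v^\top(A+B)v$ from below. The only cosmetic difference is that the paper keeps both terms in the intermediate step, obtaining $\inf v^\top(A+B)v\geq 1/\|A^{-1}\|+1/\|B^{-1}\|$ before weakening to $1/\min(\|A^{-1}\|,\|B^{-1}\|)$, whereas you drop one term immediately; the conclusion is identical.
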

\begin{proof}
    By positive definiteness of each matrix, we have
    \begin{align*}
        &\inf_{\|v\|=1}v^\top(A+B)v \geq \inf_{\|v\|=1}v^\top Av+\inf_{\|v\|=1}v^\top Bv  =\frac{1}{\|A^{-1}\|}+\frac{1}{\|B^{-1}\|}\geq \frac{1}{\min(\|A^{-1}\|,\|B^{-1}\|)}\ .
    \end{align*}
    Thus, we have
    \begin{align*}
        \|(A+B)^{-1}\|&=\frac{1}{\inf_{\|v\|=1}v^\top(A+B)v}  \leq \min(\|A^{-1}\|,\|B^{-1}\|)\ ,
    \end{align*}
    which proves the second statement.
\end{proof}

From inspection of \eqref{eq:fullyForwardForm}, the usefulness of the following lemma should be evident:

\begin{lemma}
    \label{lemma:boundedMatrices}
    If we assume Condition \ref{cond:C}, Condition \ref{cond:S_LB} and Condition \ref{cond:q_bound}, then in \eqref{eq:fullyForwardForm} there are the following bounds in coefficients,
    \begin{align}
        \label{eq:matrixBounds}
        \left\|\frac{1}{\epsilon q(\vec S_{t},P_t)}\Psi_{t}^{-1}\right\|&\leq \frac{1}{\epsilon \underbar s}\\
        \nonumber
        \left\|\left(I + \widetilde P_t\Psi_t^{-1}\right)^{-1}P_t\right\|&\leq \frac{\epsilon }{\gamma} \goodchi\|h\|_\infty\ ,
    \end{align}
    where $\underbar s$ and $\|h\|_\infty$ are constants defined in Condition \ref{cond:S_LB} and $\goodchi$ is the bound defined in Condition \ref{cond:q_bound}.
\end{lemma}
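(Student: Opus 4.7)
The first bound needs no machinery. Since $\Psi_t = \hbox{diag}(\vec S_t)$ is diagonal with entries $S_t^i \geq \underbar s > 0$ by Condition \ref{cond:S_LB}, I have $\|\Psi_t^{-1}\| \leq 1/\underbar s$; combining this with the lower bound $q \geq 1$ from Condition \ref{cond:q_bound} immediately yields $\bigl\|\tfrac{1}{\epsilon q(\vec S_t, P_t)}\Psi_t^{-1}\bigr\| \leq \tfrac{1}{\epsilon \underbar s}$.

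For the second bound, a direct splitting $\|(I + \widetilde P_t\Psi_t^{-1})^{-1}\|\cdot\|P_t\|$ will not work because no uniform upper bound on $\|P_t\|$ is available from the stated conditions. Instead, the plan is to exploit a resolvent-style rearrangement. Setting $\beta = \gamma/(\epsilon q(\vec S_t, P_t))$ so that $\widetilde P_t = \beta P_t$, I would factor
\[ I + \widetilde P_t \Psi_t^{-1} = (\Psi_t + \beta P_t)\Psi_t^{-1}, \]
which gives $(I + \widetilde P_t \Psi_t^{-1})^{-1} = \Psi_t(\Psi_t + \beta P_t)^{-1}$. Applying the elementary identity $(\Psi_t + \beta P_t)^{-1}P_t = \beta^{-1}\bigl[I - (\Psi_t + \beta P_t)^{-1}\Psi_t\bigr]$ then yields the key representation
\[ (I + \widetilde P_t \Psi_t^{-1})^{-1} P_t = \beta^{-1}\bigl[\Psi_t - \Psi_t(\Psi_t + \beta P_t)^{-1}\Psi_t\bigr]. \]

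From this representation I would argue via positive-semidefinite sandwiching. The subtracted term $\Psi_t(\Psi_t + \beta P_t)^{-1}\Psi_t$ is symmetric and PSD (it is a congruence of a PSD matrix by $\Psi_t$), so the right-hand side is symmetric and satisfies $(I + \widetilde P_t\Psi_t^{-1})^{-1}P_t \preceq \beta^{-1}\Psi_t$. The PSD lower bound $(I + \widetilde P_t\Psi_t^{-1})^{-1}P_t \succeq 0$ follows from the elementary fact that $\Psi_t + \beta P_t \succeq \Psi_t$ implies $(\Psi_t + \beta P_t)^{-1} \preceq \Psi_t^{-1}$, applied inside the expression $\beta^{-1}\Psi_t^{1/2}\bigl[I - \Psi_t^{1/2}(\Psi_t + \beta P_t)^{-1}\Psi_t^{1/2}\bigr]\Psi_t^{1/2}$. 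Taking operator norms in the PSD ordering $0 \preceq (I + \widetilde P_t\Psi_t^{-1})^{-1}P_t \preceq \beta^{-1}\Psi_t$ gives $\|(I + \widetilde P_t\Psi_t^{-1})^{-1}P_t\| \leq \beta^{-1}\|\Psi_t\|$, and invoking $q \leq \goodchi$ (Condition \ref{cond:q_bound}) together with $\|\Psi_t\| \leq \|h\|_\infty$ (from the upper bound in Condition \ref{cond:S_LB}) produces the stated constant $\tfrac{\epsilon\goodchi\|h\|_\infty}{\gamma}$.

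The main obstacle is spotting the right rearrangement: the original expression $(I + \widetilde P_t\Psi_t^{-1})^{-1}P_t$ is not visibly symmetric, and naive triangle-inequality bounds lose control of $\|P_t\|$. Once the resolvent identity above is in hand, the remaining PSD-ordering estimates are routine. This parallels the use of the Sherman--Morrison--Woodbury formula in the proof of Proposition \ref{prop:deltaEps}, and an alternative route would be to invoke that formula directly and reuse the bound on $\|\Psi_t \widetilde P_t^{-1}\|$ already established there.
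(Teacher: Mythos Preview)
Your argument is correct. Both bounds are established soundly, and the resolvent identity you use for the second bound is valid: the representation $(I+\widetilde P_t\Psi_t^{-1})^{-1}P_t=\beta^{-1}\bigl[\Psi_t-\Psi_t(\Psi_t+\beta P_t)^{-1}\Psi_t\bigr]$ holds, the resulting matrix is symmetric, and your PSD sandwich $0\preceq \cdot\preceq\beta^{-1}\Psi_t$ gives the operator-norm bound.

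The paper reaches the same conclusion by a slightly different and somewhat shorter route. It factors
\[
(I+\widetilde P_t\Psi_t^{-1})^{-1}P_t=\frac{\epsilon q(\vec S_t,P_t)}{\gamma}\Bigl(\tfrac{\epsilon q(\vec S_t,P_t)}{\gamma}P_t^{-1}+\Psi_t^{-1}\Bigr)^{-1},
\]
and then invokes the separately stated Lemma~\ref{lemma:AB}, namely $\|(A+B)^{-1}\|\le\min(\|A^{-1}\|,\|B^{-1}\|)$ for symmetric positive definite $A,B$, applied with $A=\tfrac{\epsilon q}{\gamma}P_t^{-1}$ and $B=\Psi_t^{-1}$ to get the bound $\|h\|_\infty$ directly. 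Your PSD-ordering argument is essentially an in-line proof of that lemma in this specific instance, so the two approaches are close cousins: the paper abstracts the key inequality into a reusable lemma and applies it after one algebraic rewriting, while you stay with the concrete matrices and derive the sandwich by hand. The paper's version is a touch more economical; yours has the advantage of making the symmetry and the PSD structure of $(I+\widetilde P_t\Psi_t^{-1})^{-1}P_t$ explicit, which is not obvious from the original non-symmetric form.
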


\begin{proof}
~\newline
    \begin{enumerate}
        \item The first bound in \eqref{eq:matrixBounds} is a clear consequence of Condition \ref{cond:S_LB} and Condition \ref{cond:q_bound}. 
        \item To prove the second bound in \eqref{eq:matrixBounds}, we start as follows,
        \begin{align*}
            \left(I + \widetilde P_t\Psi_t^{-1}\right)^{-1}P_t&=\frac{\epsilon}{\gamma} q(\vec S_t,P_t)  \left(\frac{\epsilon }{\gamma}q(\vec S_t,P_t)P_t^{-1} + \Psi_t^{-1}\right)^{-1}\ ,
        \end{align*}
        where $P_t$ is invertible because we have assumed Condition \ref{cond:C} putting a lower bound on the eigenvalues. However, we don't have a lower bound on $\|P_t^{-1}\|$ because the eigenvalues of $P_t$ do not have a fixed upper bound, but from Condition \ref{cond:S_LB} we have $\|\Psi_t\|\leq \|h\|_\infty$, and thus we use Lemma \ref{lemma:AB} to get
        \begin{align*}
            &\frac{\epsilon }{\gamma} q(\vec S_t,P_t)\left\|\left(\frac{\epsilon }{\gamma} q(\vec S_t,P_t)P_t^{-1} + \Psi_t^{-1}\right)^{-1}\right\| \leq \frac{\epsilon }{\gamma} q(\vec S_t,P_t)\|h\|_\infty\ ,
        \end{align*}
        which proves the second bound in \eqref{eq:matrixBounds}.
    \end{enumerate}
\end{proof}

We are now ready to prove Theorem \ref{thm:convThm}. From \eqref{eq:fullyForwardForm} and using $\Omega$ defined in \eqref{eq:Delta_and_Omega}, we obtain the following inequality,
\begin{align}
        \nonumber
    &\|\lambda_t^{(k+1)} - \lambda_t^{(k)}\|
    \leq   \delta\|(I+\widetilde P_t\Psi_t^{-1})^{-1}\|\cdot\mathbb E_t\|\lambda_{t+1}^{(k+1)} - \lambda_{t+1}^{(k)}\|\\
    \nonumber
    &+\|\gamma(I+\widetilde P_t\Psi_t^{-1})^{-1}P_t\|\sum_{t' = 1}^{t-1}\frac{\|\Psi_{t'}^{-1}\|}{\epsilon q(\vec S_{t'},P_{t'})}\|\lambda_{t'}^{(k)} - \lambda_{t'}^{(k-1)}\|\\
    &\leq   \Delta(\epsilon)\mathbb E_t\|\lambda_{t+1}^{(k+1)} - \lambda_{t+1}^{(k)}\|+\Omega\sum_{t' = 1}^{t-1}\|\lambda_{t'}^{(k)} - \lambda_{t'}^{(k-1)}\|\ ,
    \label{eq:lambda_k}
\end{align} 
where from Condition \ref{cond:damping} we have $\left\|\delta\left(I+ \widetilde P_t\Psi_t^{-1}\right)^{-1}\right\|\leq \Delta(\epsilon)<1$ and from Lemma \ref{lemma:boundedMatrices} it follows that
\[\sup_{t,t'}\frac{\|\Psi_{t'}^{-1}\|}{\epsilon q(\vec S_{t'},P_{t'})}\left\|\left(I+ \widetilde P_t\Psi_t^{-1}\right)^{-1}P_t\right\|\leq \Omega\ .\] 
If we consider a finite-time version of the problem where $\lambda_{T+1}^{(k)} \equiv 0$ for all $k$ (i.e., no control on $X_t^{(k)}=X_T^{(k)}$ for $t>T$), then we can prove convergence of the iteration scheme. Denote the expected iteration error as
\[\mathcal E_t^{(k)} = \mathbb E\|\lambda_t^{(k)} - \lambda_t^{(k-1)}\|\ .\]
Denoting the vector of errors as $$\mathcal E_{1:T}^{(k)}=\Big(\mathcal E_T^{(k)},\mathcal E_{T-1}^{(k)},\dots,\mathcal E_1^{(k)}\Big)^\top,$$ the inequality in \eqref{eq:lambda_k} can be expressed with the following matrix/vector system,
\begin{align}
	\label{eq:error_bound}
	\mathcal E_{1:T}^{(k+1)}&\leq H_T\mathcal E_{1:T}^{(k+1)}+N_T\mathcal E_{1:T}^{(k)}
\end{align}
where $H_T$ the $T\times T$ nilpotent matrix whose entry at row $i$ column $j$ is
\[
H_T^{ij}=\Delta(\epsilon)\times
\begin{cases}
    \hbox{$1$ if $i=j+1$}\\
    \hbox{$0$ otherwise,}
\end{cases}
\]
which has norm $\|H_T^k\| = \Delta^k(\epsilon)$ for $k<T$ and $\|H_T^k\|=0$ for $k>T$, and where $N_T$ is the $T\times T$ nilpotent matrix whose entry at row $i$ column $j$ is
\[
N_T^{ij}=\Omega\times 
\begin{cases}
    \hbox{$1$ if $i<j$}\\
    \hbox{$0$ otherwise,}
\end{cases}
\]
for which $N_T^k = 0$ for $k\geq T$. Therefore, by combining the error bound in \eqref{eq:error_bound} we have
\begin{align*}
\|\mathcal E_{1:T}^{(k+1)}\| 
&\leq \left\|H_T\mathcal E_{1:T}^{(k+1)}\right\|+\left\|N_T\mathcal E_{1:T}^{(k)}\right\| \leq\Delta(\epsilon)\left\|\mathcal E_{1:T}^{(k+1)}\right\|+\left\|N_T\mathcal E_{1:T}^{(k)}\right\| \ ,
\end{align*}
and after rearranging and taking $k$-many iterations, we have
\begin{align}
    \label{eq:iterationBound}
    \|\mathcal E_{1:T}^{(k+1)}\|& \leq \left(\frac{1}{1-\Delta(\epsilon)}\right)^k\|N_T^k\mathcal E_{1:T}^{(1)}\| \leq \left(\frac{1}{1-\Delta(\epsilon)}\right)^k\|N_T^k\|\|\mathcal E_{1:T}^{(1)}\|\ .
\end{align}
Now we take a moment to derive a bound on the normal of $N_T$: comparing sums with integrals we see the following,

\begin{align*}
    \|N_T^ke_T\|_1 &\leq \Omega^k\sum_{t_1=1}^{T-1}\sum_{t_2=1}^{t_1 - 1}\dots\sum_{t_k=1}^{t_{k-1}-1}1  \leq \Omega^k\int_0^Tdt_1\int_0^{t_1}dt_2\dots \int_0^{t_{k-1}}dt_k = \frac{\Omega^kT^k}{k!}\ ,
\end{align*}
for $k<T$ where $e_T\in\mathbb R^T$ is the $T^{th}$ canonical basis vector, and this bound gives us the general bound 
\begin{equation}
	\label{eq:iterated_integrals}
	\|N_T^k\|\leq\sqrt{T}\|N_T^ke_T\|_1\leq \frac{\Omega^kT^{k+1/2}}{k!}\mathbbm{1}_{k<T}\qquad\forall k\ .
\end{equation}
We apply the bound in \eqref{eq:iterated_integrals} to the right-hand side in \eqref{eq:iterationBound}, and we find that $\|\mathcal E_{1:T}^{(k+1)}\|=0$ for all $k\geq T$ thus proving convergence to a unique fixed point, hence proving the statement of Theorem \ref{thm:convThm}.

{\bf Proof of Theorem \ref{thm:NN_fixed_point}}:
\label{app:NNconvergenceProof}
Given the compact set $\mathbf K$ from \eqref{eq:UI}, by the universal approximation theorem (\cite{cybenko1989approximation}), for each iteration in \eqref{eq:implicitScheme} there is a NN for which parameters can be chosen such that
\begin{align*}
    &\sup_{t\leq T}\mathbb E\Big[\left\|\lambda(t,X_{t-1}^{(k)},\vec S_t,P_t;\theta^{(k)})- Y_t^{(k-1)}(\theta^{(k)})\right\|  \mathbbm{1}_{\{(X_{t-1}^{(k)},\vec S_t,P_t)\in \mathbf K\}}\Big]\leq  \varepsilon^{(k)}\ ,
\end{align*}
where $\varepsilon^{(k)}$ is the arbitrarily small error that can be decreased by increasing the NN's hyperparameters. In the same manner as the process given by \eqref{eq:iterationScheme}, we define the iterated process to be
\begin{align*}
    \lambda_t^{(k)}& = \lambda(t,X_{t-1}^{(k)},\vec S_t,P_t,\theta^{(k)})\ .
\end{align*}
Denote the error as
\[\mathcal E_t^{(k)}=\mathbb E\left\|\lambda_t^{(k)}-\lambda_t^*\right\|\ ,\]
where $\lambda_t^*$ is the limiting fixed-point of \eqref{eq:iterationScheme}. Using the pair of iterative equations in \eqref{eq:iterationScheme}, we can include the NN error to obtain the following recursive bound,
\begin{align*}
    &\mathcal E_t^{(k)}\leq  \frac{\mathbb E\left[\|\gamma(I+\widetilde P_t\Psi_t^{-1})^{-1}P_t\|\cdot\|X_{t-1}^{(k)}-X_{t-1}^*\|\right]+(\varepsilon^{(k)}+2\eta)} {1-\Delta(\epsilon)}\ .
\end{align*}
Then, proceeding similarly to the proof of Theorem \ref{thm:convThm} in Appendix, we denote the vector of all errors as $\mathcal E_{1:T}^{(k)}=\Big(\mathcal E_T^{(k)},\mathcal E_{T-1}^{(k)},\dots,\mathcal E_1^{(k)}\Big)^\top$, and like we had in \eqref{eq:error_bound}, a bound on these errors can be expressed with the following matrix/vector system,
\begin{align*}
    \mathcal E_{1:T}^{(k+1)}&\leq \frac{1}{1-\Delta(\epsilon)}\left(N_T\mathcal E_{1:T}^{(k)}+\sup_\ell(\varepsilon^{(\ell)}+2\eta)\mathbf 1\right)\\
    &\leq \left(\frac{1}{1-\Delta(\epsilon)}\right)^kN_T^k\mathcal E_{1:T}^{(1)}  +\sup_\ell(\varepsilon^{(\ell)}+2\eta)\sum_{i=0}^{k-1}\left(\frac{1}{1-\Delta(\epsilon)}\right)^{i+1}N_T^i\mathbf 1\ ,
\end{align*}
where $H_T$ and $N_T$ are the same matrices that were defined in Appendix, and where $\mathbf 1$ denotes the vector in $\mathbb R^T$ of all 1's.
Thus,
\begin{align*}
    &\|\mathcal E_{1:T}^{(k+1)}\|
    \leq \left(\frac{1}{1-\Delta(\epsilon)}\right)^k\|N_T^k\mathcal E_{1:T}^{(1)}\|  +\sup_\ell(\varepsilon^{(\ell)}+2\eta)\sum_{i=0}^{k-1}\left(\frac{1}{1-\Delta(\epsilon)}\right)^{i+1}\|N_T^i\mathbf 1\|\\
    &\leq \left(\frac{1}{1-\Delta(\epsilon)}\right)^k\|N_T^k\|\|\mathcal E_{1:T}^{(1)}\|  +\sqrt{T}\left(\sup_\ell\varepsilon^{(\ell)}+2\eta\right)\sum_{i=0}^{k-1}\left(\frac{1}{1-\Delta(\epsilon)}\right)^{i+1}\|N_T^i\|\\
    &= \left(\frac{1}{1-\Delta(\epsilon)}\right)^k\frac{\Omega^kT^{k+1/2}}{k!}\mathbbm{1}_{k<T}\|\mathcal E_{1:T}^{(1)}\|+ \sqrt{T}\left(\sup_\ell\varepsilon^{(\ell)}+2\eta\right)\sum_{i=0}^{k-1}\left(\frac{1}{1-\Delta(\epsilon)}\right)^{i+1}\frac{\Omega^iT^{i+1/2}}{i!}\mathbbm{1}_{i<T}\ ,
\end{align*}
 for large $k\geq T$ we have

\[\|\mathcal E_{1:T}^{(k+1)}\|=\mathcal O\left(\left(\sup_\ell\varepsilon^{(\ell)}+2\eta\right)\exp\Big(\frac{\Omega T}{1-\Delta(\epsilon)}\Big)\right)\ ,\]
which is the statement of Theorem \ref{thm:NN_fixed_point}.


{\bf Proof of Proposition \ref{prop:expansionAccuracy}}:
\label{app:smallEpsilonProof}
Denote the state process from the order-$\epsilon$ approximation as
\[X_{t}^{[1,2]}=X_{t-1}^{[1,2]}-\sum_{t'=1}^{t}\frac{1}{q(\vec S_{t'}P_{t'})}\Psi_{t'}^{-1}(\tilde\lambda_{t'}^{[0]}+\epsilon\widetilde\lambda_t^{[1]})\ ,\]
for $t=1,2,3,\dots, T$. Inserting the expansion into \eqref{eq:LambdatildeLambdaSystem} results in the order-$\epsilon$ approximation error
    \begin{align*}
        & \widetilde\lambda_t^{[0]}+\epsilon \widetilde\lambda_t^{[1]} -\left(\epsilon I+\frac{\gamma}{q(\vec S_t,P_t)} P_t\Psi_t^{-1}\right)^{-1}  \left(\delta \epsilon\mathbb E_t\left[\widetilde\lambda_{t+1}^{[0]}+\epsilon \widetilde\lambda_{t+1}^{[1]}\right]-\gamma P_t\left(\frac{1}{\gamma}P_t^{-1}\Psi_t\mu - X_{t-1}^{[1,2]}\right)\right)\\
        &=-\left(\epsilon I+\frac{\gamma}{q(\vec S_t,P_t)} P_t\Psi_t^{-1}\right)^{-1}\left(\delta \epsilon^2\mathbb E_t \widetilde\lambda_{t+1}^{[1]}\right)\ ,
    \end{align*}
    where the equality uses the expressions of for $\widetilde\lambda_t^{[0]}$ and $\widetilde \lambda_t^{[1]}$ given in \eqref{eq:expansionTerms}. Using Lemma \ref{lemma:boundedMatrices}, we have
    \[\left\|\left(\epsilon I+\frac{\gamma}{q(\vec S_t,P_t)} P_t\Psi_t^{-1}\right)^{-1}\right\|=\frac{\Delta(\epsilon)}{\epsilon}\ .\]
    Therefore the differential error given above is of order $\mathcal O\left(\epsilon\Delta(\epsilon)\mathbb E\sup_{t\leq T}\|\widetilde\lambda_t^{[1]}\| \right)$. Thus, similar to Theorem \ref{thm:NN_fixed_point}, we have a big-oh bound on error,
    \begin{align*}
        &\sup_{t\leq T}\mathbb E\left\|\widetilde\lambda_t^{[0]}+\epsilon\widetilde\lambda_t^{[1]}-\widetilde\lambda_t^*\right\|=  \mathcal O\left(\epsilon\Delta(\epsilon)\exp\Big(\frac{\Omega T}{1-\Delta(\epsilon)}\Big)\mathbb E\sup_{t\leq T}\|\widetilde\lambda_t^{[1]}\|\right)\ ,
    \end{align*} 
    where $\widetilde\lambda^*$ denotes the solution from \eqref{eq:XtildeLambdaSystem} and \eqref{eq:LambdatildeLambdaSystem}.

\begin{table}
\caption{NN Architecture.}
\begin{center}
\begin{tabular}{cccc}
\toprule
\textbf{Layer} & \textbf{Type} & \textbf{Size} & \textbf{Activation}   \\
\hline
Input & Linear & 143 $\to$ 400 & tanh \\ 
Hidden & Linear & 400 $\to$ 400 & tanh \\ 
Hidden & Linear & 400 $\to$ 400 & tanh \\ 
Hidden & Linear & 400 $\to$ 400 & tanh \\ 
Hidden & Linear & 400 $\to$ 400 & tanh \\ 
Output & Linear & 400 $\to$ 11 & tanh \\ 
\bottomrule
\end{tabular}
\end{center} 
\label{table:architecture}
\end{table}

\bibliographystyle{IEEEtranN}
\bibliography{arxiv}

\clearpage

\end{document}